\documentclass[10pt]{elsarticle}
\usepackage{amssymb,amsmath}
\usepackage{enumitem}
\usepackage{bbm}
\usepackage{mathtools}
\usepackage[margin=1in,papersize={8.5in,11in}]{geometry}
\usepackage[utf8]{inputenc}
\usepackage[english]{babel}
\usepackage{amsthm}
\usepackage{float}
\usepackage{amsfonts,mathrsfs}
\usepackage{bm}
\usepackage{makecell}
\usepackage{tikz}
\usetikzlibrary{matrix}

\newtheorem{hyp}{Hypothesis}
\usepackage{epstopdf}
\epstopdfsetup{update} 
\usepackage{subcaption}
\usepackage{tablefootnote}
\graphicspath{ {images/} }
\usepackage{comment}

\newtheorem{theorem}{Theorem}
\newtheorem{corollary}{Corollary}[theorem]
\newtheorem{lemma}[theorem]{Lemma}
\newtheorem{prop}{Proposition}
\theoremstyle{definition}
\newtheorem{defn}{Definition}


\def\R{\mathbb{R}}

\def\Pol{\text{Pol}}

\def\A{\mathcal{A}}

\def\S{\mathcal{S}} 
 
\def\M{\mathcal{M}}
  
\def\N{\mathcal{N}}

\def\P{\mathcal{P}}
\def\U{\mathcal{U}} 
\def\X{\mathcal{X}}
\def\Y{\mathcal{Y}}
\def\Re{\operatorname{Re}} 
\def\Im{\operatorname{Im}}  
\def\Q{\mathcal{Q}}

\def\C{\mathcal{C}}
\def\G{\mathcal{G}}
\def\K{\mathcal{K}} 
\def\F{\mathcal{F}} 
\def\I{\mathcal{I}}
\def\U{\mathcal{U}}

\def\Pi{\mathbf{P}}

\def\tP{\tilde{\mathcal{P}}}
\def\tQ{\tilde{\mathcal{Q}}}
\def\tK{\tilde{\mathcal{K}}}
\def\tS{\tilde{\mathcal{S}}}
\journal{arXiv}


\begin{document}
\begin{frontmatter}
\title{Hypoellipticity and the Mori-Zwanzig formulation of 
stochastic differential equations}


\author[ucm]{Yuanran Zhu}
\author[ucsc]{Daniele Venturi\corref{correspondingAuthor}}
\address[ucsc]{Department of Applied Mathematics, University of California Santa Cruz\\ Santa Cruz (CA) 95064}
\address[ucm]{Department of Applied Mathematics, University of California Merced\\ Merced (CA) 95343}
\cortext[correspondingAuthor]{Corresponding author}
\ead{venturi@ucsc.edu}
 
\begin{abstract}
We develop a thorough mathematical analysis of the effective Mori-Zwanzig (EMZ) equation governing the dynamics of noise-averaged observables in stochastic differential equations driven by multiplicative Gaussian white noise. Building upon recent work on hypoelliptic operators, we prove that the EMZ memory kernel and fluctuation terms converge exponentially fast in time to a unique equilibrium state which admits an explicit representation. We apply the new theoretical results to the Langevin dynamics of a high-dimensional particle system with smooth interaction potential.  
\end{abstract}
\end{frontmatter}

\section{Introduction}

The Mori-Zwanzig (MZ) formulation is a technique 
originally developed in statistical mechanics  \cite{Mori,zwanzig1961memory}
to formally integrate out phase variables in 
nonlinear dynamical systems by means of a 
projection operator. One of the main features of such formulation 
is that it allows us to systematically derive exact 
generalized Langevin equations (GLEs) 
\cite{zwanzig1973nonlinear,Chorin,Darve,Venturi_PRS} for 
quantities of interest, e.g., macroscopic observables, 
based on microscopic equations of motion. 
Such GLEs can be found in a variety of 
applications, including particle dynamics 
\cite{Li2015,VenturiBook,Yoshimoto2013,van1986brownian,
espanol1995statistical,espanol1995hydrodynamics}, 
fluid dynamics \cite{parish2017non,Falkena2019,parish2017dynamic}, 
and, more generally, systems described by nonlinear 
partial differential equations (PDEs)
\cite{venturi2014convolutionless,ChoPRS2014,
stinis2015renormalized,stinis2004stochastic,Stinis1,
Brennan2018,Venturi_JCP,Venturi_IJHMT,lu2017data,lin2019data}. 
Computing the solution to the MZ equation is usually 
a daunting task. One of the main 
difficulties is the approximation of the memory integral 
(convolution term) and the fluctuation term, which encode the 
interaction between the so-called orthogonal dynamics and the 
dynamics of the quantity of interest. 
The orthogonal dynamics is essentially a high-dimensional flow 
governed by an integro-differential equation that is hard to solve. 
The mathematical properties of the orthogonal dynamics, 
and hence the properties of the MZ memory integral and the MZ fluctuation term are not well understood.
Kupferman, Givon and Hald \cite{givon2005existence} 
proved existence and uniqueness of the orthogonal 
dynamics for deterministic dynamical systems and 
Mori's projection operators. 
More recently, we proved uniform boundedness 
of the orthogonal dynamics propagator 
for Hamiltonian systems using semigroup estimates \cite{zhu2018estimation,engel1999one}.

The main objective of this paper is to generalize
the MZ formulation to stochastic differential equations 
(SDEs) \cite{espanol1995hydrodynamics,hudson2018coarse} 
driven by multiplicative Gaussian white noise. 
In particular, we aim at developing a 
thorough mathematical analysis of the 
so-called effective Mori-Zwanzig (EMZ) equation 
governing the dynamics of noise-averaged observables, i.e., 
smooth functions of the stochastic flow
generated by the SDE which are averaged over the 
probability measure of the random noise. 
To this end,  we build upon recent work of Eckmann \& Hairer 
\cite{eckmann1999non, eckmann2000non,eckmann2003spectral},
H\`erau \& Nier \cite{herau2004isotropic} and Helffer \& 
Nier \cite{nier2005hypoelliptic} on the spectral properties 
of backward Kolmogorov operators, and show that 
the generator of EMZ orthogonal dynamics 
has a discrete spectrum that lies within cusp-shaped 
region of the complex plane. This allows us to 
rigorously prove exponential relaxation to a unique 
equilibrium state for both the EMZ memory kernel 
and the EMZ fluctuation term.

This paper is organized as follows. In Section \ref{sec:MZ_time}, 
we develop a self-consistent MZ formulation for 
stochastic differential equations driven by multiplicative 
Gaussian white noise and derive the 
effective Mori Zwanzig  equation governing 
the dynamics of noise-averaged observables. 
In Section \ref{sec:abs_ana} we study the 
theoretical properties of the EMZ equation. 
To this end, we first review H\"ormander's theory of linear 
hypoelliptic operators, and then show how such theory  can 
used to prove exponential convergence of the 
EMZ orthogonal dynamics propagator to a unique 
equilibrium state.
In Section \ref{sec:app}, we apply our theoretical 
results to the Langevin dynamics of 
high-dimensional particle systems with smooth interaction 
potentials that grow at most polynomially fast at infinity. The 
main findings are summarized in Section \ref{sec:conclusion}.

\section{The Mori-Zwanzig formulation of stochastic differential equations}
\label{sec:MZ_time}
Let us consider a $d$-dimensional stochastic 
differential equation on a smooth manifold $\M$
\begin{align}\label{eqn:sde}
\frac{d\bm x(t)}{dt}=\bm F(\bm x(t))+\bm \sigma(\bm x(t))\bm \xi(t), \qquad \bm x(0)=\bm x_0\sim \rho_0(x),
\end{align}
where $\bm F:\M\mapsto \R^d$ and 
$\bm \sigma: \M\rightarrow \R^{d\times m}$ are 
smooth functions, $\bm \xi(t)$ is 
$m$-dimensional Gaussian white noise with 
independent components, and $\bm x_0$ is a 
random initial state characterized in terms of 
a probability density function $\rho_0(\bm x)$.
The solution \eqref{eqn:sde} is 
a $d$-dimensional stochastic (Brownian) flow
on the manifold $\M$ \cite{kunita1997stochastic}.
As is well known, if $\bm F:\M\mapsto \R^d$ and 
$\bm \sigma: \M\rightarrow \R^{d\times m}$ are of class
$\mathcal{C}^{k+1}$ ($k\geq 0$) 
with uniformly bounded derivatives, then the solution 
to \eqref{eqn:sde} is global, and that the 
corresponding flow is a stochastic 
flow of diffemorphisms of class $\mathcal{C}^{k}$ (see \cite{Budhiraja2010,wihstutz2012diffusion,Watanabe1984}). 
This means that the stochastic flow is differentiable 
$k$ times (with continuous derivative), with respect 
to the initial condition for all $t$. 
Define the vector-valued phase space function 
(quantity of interest)
\begin{equation}
\begin{array}{cc}
\bm u\colon& \mathcal{M} \to \mathbb{R}^M \\
&\bm x  \mapsto \bm u(\bm x)
\end{array}
\qquad \text{(quantity of interest)}.
\label{observable}
\end{equation}

\noindent
{By evaluating $\bm u(\bm x)$ 
along the stochastic flow generated by 
the SDE \eqref{eqn:sde} and averaging  
over the Gaussian white noise we obtain} 
\begin{align}
\mathbb{E}_{\bm \xi(t)}[\bm u(\bm x(t))|\bm x_0]= 
\F(t,0)\bm u(\bm x_0).
\label{MarkovSemi}
\end{align}
The evolution operator $\F(t,0)$ is a Markovian 
semigroup \cite{da1996ergodicity}
generated by the following (backward) Kolmogorov 
operator \cite{Risken,kloeden2013numerical,venturi2018numerical}
\begin{align}
\K(\bm x_0)&=\sum_{k=1}^dF_k(\bm x_0)\frac{\partial}{\partial x_{0k}}
+\frac{1}{2}\sum_{j=1}^m\sum_{i,k=1}^d\sigma_{ij}(\bm x_0)
\sigma_{kj}(\bm x_0)\frac{\partial}{\partial x_{0i}\partial x_{0k}},
\label{KI}
\end{align}
{which corresponds to the It\^o interpretation of the SDE \eqref{eqn:sde}.}
Formally, we will write
\begin{equation}
\F(t,0)=e^{t\K}.
\label{eK}
\end{equation}
{To derive the 
effective Mori-Zwanzig (EMZ) equation 
governing the time evolution of the averaged
observable \eqref{MarkovSemi}, 
we introduce a projection operator $\P$ 
and the complementary projection $\Q=\I-\P$.
By following the formal procedure outlined in 
\cite{zhu2019generalized,zhu2018faber,dominy2017duality} 
we obtain}
\begin{equation}
\frac{\partial}{\partial t}e^{t\K}\bm u(0)
=e^{t\K}\mathcal{PK}\bm u(0)
+e^{t\Q\K\Q}\mathcal{QK}\bm u(0)+\int_0^te^{s\K}\P\K
e^{(t-s)\Q\K\Q}\mathcal{QK}\bm u(0)ds.\label{eqn:EMZ_full}
\end{equation}
Applying the projection operator 
$\P$ to \eqref{eqn:EMZ_full}
yields
\begin{equation} 
\frac{\partial}{\partial t}\mathcal{P}e^{t\K}\bm u(0)
=\mathcal{P}e^{t\K}\mathcal{PK}\bm u(0)
+\int_0^t\P e^{s\K}\mathcal{PK}
e^{(t-s)\Q\K\Q}\mathcal{QK}\bm u(0)ds.\label{eqn:EMZ_projected}
\end{equation}
Note that both the EMZ equation \eqref{eqn:EMZ_full} and its projected form \eqref{eqn:EMZ_projected} have the same structure as the classical MZ equation for deterministic (autonomous) systems 
\cite{zhu2019generalized,zhu2018estimation,zhu2018faber}.

\subsection{EMZ equation with Mori's projection operator}
Let us consider the weighted Hilbert space 
$H=L^2(\M,\rho)$, where $\rho$ is a positive weight 
function in $\M$. For instance, $\rho$ can be the 
probability density function of the random initial state 
$\bm x_0$. Let
\begin{equation}
\langle h,g\rangle_{\rho}=\int_{\M} 
h(\bm x)g(\bm x)\rho(\bm x)d\bm x 
\qquad h,g\in H
\label{ip}
\end{equation}
be the inner product in $H$. 
We introduce the following projection operator
\begin{align}
\label{Mori_P}
\P h=\sum_{i,j=1}^M G^{-1}_{ij}
\langle u_i(0),h\rangle_{\rho}u_j(0),
\qquad h\in H,
\end{align}
where $G_{ij}=\langle u_i(0),u_j(0)\rangle_{\rho}$
and $u_i(0)=u_i(\bm x)$ ($i=1,...,M$) are
$M$ linearly independent functions. With $\P$ 
defined as in \eqref{Mori_P}, we can write 
the EMZ equation \eqref{eqn:EMZ_full} and its 
projected version \eqref{eqn:EMZ_projected}
as
\begin{align}
\frac{d\bm q(t)}{dt} &= \bm \Omega \bm q(t) +
\int_{0}^{t}\bm K(t-s)\bm q(s)ds+\bm f(t),\label{gle_full}\\
\frac{d}{dt}\P{\bm q}(t) &= \bm \Omega\P {\bm q}(t)+ 
\int_{0}^{t} \bm K(t-s) \P {\bm q}(s)ds,\label{gle_projected}
\end{align}
{where $\bm q(t)=\mathbb{E}
[\bm u(\bm x(t))|\bm x_0]$ (column vector)} and 
\begin{subequations}
\begin{align}
		G_{ij} & = \langle u_{i}(0), u_{j}(0)\rangle_{\rho}
		\quad \text{(Gram matrix)},\label{gram}\\
		\Omega_{ij} &= \sum_{k=1}^MG^{-1}_{jk}
		\langle u_{k}(0), \K u_{i}(0)\rangle_{{\rho}}\quad 
		\text{(streaming matrix)},\label{streaming}\\
		K_{ij}(t) & =\sum_{k=1}^M G^{-1}_{jk}
		\langle u_{k}(0), \K e^{t\Q\K\Q}\Q\K u_{i}(0)\rangle_{\rho}\quad 
		\text{(memory kernel)},\label{SFD}\\
		 f_i(t)& =e^{t\Q\K\Q}\Q\K u_i(0) \quad 
		\text{(fluctuation term)}.\label{f}
	\end{align}
\end{subequations}
\noindent
{In equations \eqref{gram}-\eqref{f} 
we have $u_j(0)= q_j(0)=u_j(\bm x_0)$ ($j=1,\ldots,M)$.} Also,  
the Kolmogorov operator $\K$ is not skew-symmetric 
relative to $\langle,\rangle_{\rho}$ and therefore it is 
not possible (in general) to represent the memory 
kernel ${\bm K}(t)$ as a function of the auto-correlation 
of $\bm f(t)$ using the second fluctuation-dissipation 
theorem \cite{zhu2019generalized,zhu2021generalized}.

\subsection{An Example: EMZ formulation of the Ornstein-Uhlenbeck SDE} 

Let us consider the Ornstein-Uhlenbeck process defined by 
the solution to the It\^{o} stochastic differential equation
{\begin{align}\label{eqn:OUsde}
\frac{d}{dt}x=-\theta x+\sigma\xi(t),
\end{align} 
where $\sigma$ and $\theta$ are positive parameters}
and $\xi(t)$ is Gaussian white noise with correlation function 
$\langle\xi(t),\xi(s)\rangle=\delta(t-s)$. As is well-known, the 
Ornstein-Uhlenbeck process is ergodic and it 
admits a stationary (equilibrium) Gaussian 
distribution {$\rho_{eq}=\N(0, \sigma^2/2\theta)$}. 
Let $x(0)$ be a random initial state with probability 
density function {$\rho_0=\rho_{eq}$}.
%
The conditional mean and conditional covariance 
function of the process $x(t)$ are given by
{\begin{align}
\mathbb{E}_{\xi(t)}[x(t)|x(0)]=&x(0)e^{-\theta t},
\label{Ou_mt}\\
\mathbb{E}_{\xi(t)}[x(t)x(s)|x(0)]=&
x(0)^2e^{-\theta(t+s)}+\frac{\sigma^2}{2\theta}
\left(e^{-\theta|t-s|}-e^{-\theta(t+s)}
\right).
\end{align}}
Averaging over the random initial state yields 
{
\begin{align}
\mathbb{E}_{x(0)}[\mathbb{E}_{\xi(t)} [x(t)|x(0)]]&=0,\label{Ou_Mt}\\
\mathbb{E}_{x(0)}[\mathbb{E}_{\xi(t)} [x(t)x(s)|x(0)]]
&=\frac{\sigma^2}{2\theta}e^{-\theta|t-s|}.\label{Ou_Ct}
\end{align}}
At this point, we define the projection operators 
{
\begin{align}\label{EMZ_moriP}
\P_1(\cdot)=\frac{\langle \cdot\rangle_{\rho_{eq}}}{\langle x(0)\rangle_{\rho_{eq}}}x(0), \qquad \P_2(\cdot) =\frac{\langle \cdot,x(0)\rangle_{\rho_{eq}}}{\langle x(0),x(0)\rangle_{\rho_{eq}}}x(0).
\end{align}}
The Kolmogorov backward operator associated with \eqref{eqn:OUsde} is
{
\begin{align}
\K(x)=-\theta x\frac{\partial}{\partial x}+\frac{\sigma^2}{2}\frac{\partial^2}{\partial x^2}.
\end{align}}
By using the identity 
\begin{align}
\frac{d}{dt}\mathbb{E}_{ x(0)}[\mathbb{E}_{\xi(t)}[x(t)x(0)|x(0)]]
=\frac{d}{dt}\mathbb{E}_{ x(0)}[\mathbb{E}_{\xi(t)}[x(t)|x(0)]x(0)]
=\left\langle\frac{d}{dt}\F(t,0)x(0), x(0)\right\rangle_{\rho_{eq}},
\end{align}
it is straightforward to verify that the EMZ equation \eqref{gle_full} with $\P=\P_1$, 
and the EMZ equation \eqref{gle_projected} with $\P=\P_2$ can be written, respectively, as 
\begin{align}
\frac{d}{dt}M(t)=-\theta M(t),\qquad 
\frac{d}{dt}C(t)=-\theta C(t).
\label{OU_mt_GLE}
\end{align}
{Here, $ M(t)=\mathbb{E}_{\xi(t)}[x(t)|x(0)]$ is the 
conditional mean of $x(t)$, while 
$C(t)=\mathbb{E}_{x(0)}[\mathbb{E}_{\xi(t)} [x(t)x(0)|x(0)]]$ is autocorrelation
function of $x(t)$.}
Clearly, equations \eqref{OU_mt_GLE} are the exact evolution equations governing 
$M(t)$ and $C(t)$ . In fact, their solutions coincide with \eqref{Ou_mt} and \eqref{Ou_Ct}, respectively. {Note that $M(t)$ is a stochastic 
process ($x(0)$ is random), while $C(t)$ is a deterministic function.}

\section{Analysis of the effective Mori-Zwanzig equation}
\label{sec:abs_ana}
In this section we develop an in-depth mathematical 
analysis of the effective Mori-Zwanzig equation
\eqref{eqn:EMZ_full} using H\"ormander's
theory \cite{herau2004isotropic,nier2005hypoelliptic,
villani2009hypocoercivity,ottobre2011exponential}. 
In particular, we build upon the result of H\'erau and 
Nier \cite{herau2004isotropic}, Eckmann and Hairer 
\cite{eckmann1999non,eckmann2000non,
eckmann2003spectral}, and Helffer and Nier 
\cite{nier2005hypoelliptic} on linear hypoelliptic 
operators to prove that the generator of the 
EMZ orthogonal dynamics, i.e., $\Q\K\Q$, satisfies a hypoelliptic estimate.
Consequently, the propagator $e^{t\Q\K\Q}$ 
converges exponentially fast (in time) to statistical equilibrium. 
This implies that both the EMZ memory kernel \eqref{SFD} 
and fluctuation term \eqref{f} converge exponentially 
fast to an equilibrium state. 
One of the key results of such analysis is the fact 
that the spectrum of $\Q\K\Q$ lies within a cusp-shaped 
region of the complex half-plane. 
For consistency with the literature on hypoelliptic operators, 
we will use the negative of $\K$ and $\Q\K\Q$ as semigroup 
generators and write the 
semigroups appearing in EMZ equation \eqref{eqn:EMZ_full} 
as $e^{-t\K}$ and $e^{-t\Q\K\Q}$. Clearly,
if $\K$ and $\Q\K\Q$ are dissipative then 
$-\K$ and $-\Q\K\Q$ are accretive.
Unless otherwise stated, throughout 
this section we consider scalar quantities of 
interest, i.e., we set $M=1$ in equation \eqref{observable}.

\subsection{Analysis of the Kolmogorov operator}
\label{sec:K}

The Kolmogorov operator \eqref{KI} is a H\"ormander-type operator 
which can be written in the general form
\begin{align}\label{general_K}
\K(\bm x)=\sum_{i=1}^m \X_i^*(\bm x)\X_i(\bm x)+\X_0(\bm x)+f(\bm x),
\end{align} 
where $\X_i(\bm x)$ ($0\leq i\leq m$) denotes a 
{first-order partial differential operator} in the variable $x_i$ 
with space-dependent coefficients, $\X_i^*(\bm x)$ 
is the formal adjoint of $\X_i(\bm x)$ in $L^2(\R^n)$, 
and $f(\bm x)$ is a function that has at most polynomial 
growth at infinity.
To derive useful spectral estimates for $\K$, it 
is convenient to first provide some definitions 
\begin{defn}
\label{Pol_cond} Let $N$ be a real number. Define
\begin{align*}
\Pol_0^N=\left\{f\in\C^{\infty}(\R^n)\,\mid\sup_{\bm x \in\R^n}(1+\|\bm x\|)^{-N}|\partial^{\bm \alpha}f(\bm x)|\leq C_{\bm \alpha}\right\},
\end{align*}
where $\bm \alpha$ is a multi-index of arbitrary 
order. Note that $\Pol_0^N$ is the set of infinitely 
differentiable functions growing at most polynomially 
as $\|\bm x\|\rightarrow\infty$.
Similarly, we define the space of $k$-th order 
differential operators with coefficients growing 
at most polynomially with $\bm x$ as 
\begin{align*}
\Pol_k^N= \left\{ \G:\C^{\infty}(\R^n)\rightarrow \C^{\infty}(\R^n)\,\mid 
\G(\bm x)=G_0(\bm x)+\sum_{j=1}^n\sum_{i=1}^kG_j^i(\bm x)\partial_j^i,\quad G_j^i\in \Pol_0^N\right\}.
\end{align*}
\end{defn}  
\noindent
It is straightforward to verify that if $\X\in\Pol_k^N$ and $\Y\in\Pol_l^M$ 
then the operator commutator $[\X,\Y]=\X\Y-\Y\X$ is in 
$\Pol_{k+l-1}^{N+M}$. 

\begin{defn}
\label{nond_cond}
The family of operators $\{\A_1,\ldots,\A_m\}$ defined as 
\begin{equation}
\A_i(\bm x)=\sum_{j=1}^nA_{ij}(\bm x)\partial_j\qquad i=1,...,m
\end{equation}
is called {\em non-degenerate} if there are 
two constants $N$ and $C$ such that 
\begin{align*}
\|\bm y\|^2\leq C(1+\|\bm x\|^2)^N\sum_{i=1}^m\langle 
\A_i(\bm x),\bm y \rangle^2 \qquad \forall \bm x,\bm y\in\R^n,
\end{align*}
where $\displaystyle \langle \A_i(\bm x),\bm y\rangle=
\sum_{j=1}^nA_{ij}(\bm x)y_j$.
\end{defn}
\noindent
It was recently shown by Eckmann and Hairer 
\cite{eckmann2000non,eckmann2003spectral} that 
$\K$ is hypoelliptic if the Lie algebra 
generated by the operators $\{\X_0,\ldots,\X_m\}$ 
in \eqref{general_K} is non-degenerate. 
The main result can be summarized as follows:
\begin{prop}[Eckmann and Hairer \cite{eckmann2003spectral}]\label{Lie_cond}
Let $\{\X_0,\ldots,\X_m\}$ and $f$ in  \eqref{general_K} 
satisfy the following conditions:
\begin{enumerate}
\item $\X_j\in\Pol_1^N $ for all $j=0,\ldots,m$, and $f\in\Pol_0^N$;
\label{hypo_cond1}
\item There exits a finite integer $M$ such that 
the family of operators consisting of $\{\X_i\}_{i=0}^m$, 
$\{[\X_i,\X_j]\}_{i,j=1}^m$, $\{[\X_i,[\X_j,\X_k]]\}_{i,j,k=1}^m$ 
and so on up to the commutators of rank $M$ is non-degenerate;
\label{hypo_cond2}
\end{enumerate} 
Then the operator  $\K$ defined in \eqref{general_K} and 
$\partial_t+ \K$ are both hypoelliptic. 
\end{prop}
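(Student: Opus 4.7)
The plan is to deduce this from Hörmander's classical sum-of-squares theorem by bringing $\K$ into canonical form and verifying the pointwise bracket-generating condition from the non-degeneracy hypothesis. Since $\X_i^*$ is the formal $L^2(\R^n)$ adjoint of $\X_i$, integration by parts gives $\X_i^* = -\X_i + c_i$ with $c_i = -\Div \X_i \in \Pol_0^N$, so that $\X_i^*\X_i = -\X_i^2 + c_i \X_i$ and
\begin{equation*}
\K = -\sum_{i=1}^m \X_i^2 + \tilde\X_0 + f, \qquad \tilde\X_0 = \X_0 + \sum_{i=1}^m c_i \X_i.
\end{equation*}
By hypothesis (i), $\tilde\X_0 \in \Pol_1^{N'}$ and $f \in \Pol_0^{N}$ for some $N'$. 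Up to an overall sign, this is the form to which Hörmander's sum-of-squares theorem applies.

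Next, I would show that hypothesis (ii) yields the classical pointwise Hörmander bracket-generating condition. Fixing $\bm x \in \R^n$, Definition \ref{nond_cond} says that if $\bm y \in \R^n$ is orthogonal to every vector field in the commutator collection at $\bm x$, then $\|\bm y\|^2 \leq 0$, forcing $\bm y = 0$; hence those vector fields span the tangent space at every point. The passage from $\X_0$ to $\tilde\X_0$ only adjoins a $\C^{\infty}$-linear combination of $\X_1,\ldots,\X_m$, and by the commutator rule noted after Definition \ref{Pol_cond} every iterated bracket stays within the polynomial calculus $\Pol_k^N$, so the Lie algebra generated by $\{\tilde\X_0,\X_1,\ldots,\X_m\}$ still spans $\R^n$ pointwise.

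Third, I would invoke Hörmander's theorem locally. On any bounded open $U \subset \R^n$ the reduced coefficients are smooth and bounded, and the bracket condition holds pointwise on $U$; the classical theorem then produces a subelliptic estimate of the form $\|u\|_{s+\varepsilon} \leq C_s(\|\K u\|_s + \|u\|_s)$ for $u \in \C^{\infty}_c(U)$ and some $\varepsilon > 0$, which is equivalent to hypoellipticity of $\K$ on $U$. Since hypoellipticity is a local property and $U$ is arbitrary, $\K$ is hypoelliptic on all of $\R^n$. For $\partial_t + \K$ I would repeat the argument on $\R \times \R^n$ with drift $\partial_t + \tilde\X_0$: because the $\X_i$ have $t$-independent coefficients, $[\partial_t,\X_i] = 0$, so the spatial brackets are unchanged and span $\R^n$, while the extra direction $\partial_t$ is recovered inside the enlarged Lie algebra as $(\partial_t + \tilde\X_0) - \tilde\X_0$. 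A second application of Hörmander's theorem then gives the hypoellipticity of $\partial_t + \K$.

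The main obstacle is the second step: translating the polynomially-weighted, operator-level non-degeneracy of Definition \ref{nond_cond} into a genuine pointwise Hörmander bracket-generating condition, while simultaneously keeping track of the polynomial class $\Pol_k^N$ under the commutator calculus so that the reduction from $\X_i^*\X_i$ to sum-of-squares form does not disturb the relevant Lie algebra.
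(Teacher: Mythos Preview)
The paper does not actually prove this proposition: it is stated as a citation from Eckmann and Hairer \cite{eckmann2003spectral}, with no accompanying argument. So there is no ``paper's own proof'' to compare against.

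Your sketch is a reasonable route to the bare hypoellipticity statement. The reduction $\X_i^*\X_i=-\X_i^2+c_i\X_i$ is correct, the observation that non-degeneracy in the sense of Definition~\ref{nond_cond} forces pointwise spanning is sound (if $\langle \A_i(\bm x),\bm y\rangle=0$ for all $i$ then the inequality gives $\|\bm y\|=0$), and the replacement $\X_0\mapsto\tilde\X_0=\X_0+\sum c_i\X_i$ does not change the Lie algebra since the $\X_i$ are already present. Local application of the classical H\"ormander theorem then gives hypoellipticity of $\K$ and, after adjoining $\partial_t$, of $\partial_t+\K$.

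Two remarks are worth making. First, the reason Eckmann and Hairer work in the polynomial calculus $\Pol_k^N$ rather than simply invoking classical H\"ormander is that the downstream results in the paper (Theorem~\ref{EH_spec_est} and everything built on it) require \emph{global} weighted subelliptic estimates of the form $\|u\|_{\delta,\delta}\le C(\|u\|+\|\K u\|)$, not merely local hypoellipticity; your local argument yields the proposition as stated but not these global bounds, which are the real content of \cite{eckmann2003spectral}. Second, you should handle the zero-order term $f$ explicitly: a smooth multiplication operator is a lower-order perturbation that does not affect hypoellipticity, but this should be said rather than absorbed silently.
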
 

\noindent 
Conditions \ref{hypo_cond1} and \ref{hypo_cond2}
in Proposition \ref{Lie_cond} are called 
{\em poly-H\"{o}rmander conditions}. 
Eckmann et al. \cite{eckmann2000non,eckmann1999non} 
also proved the hypoellipticity of the operator 
$\partial_t+\K^*$ for a specific heat condition model, 
which guarantees smoothness (in time) of the transition 
probability governed by the Kolmogorov forward
equation.
Hereafter we review additional important 
properties of the Kolmogorov operator $\K$. 
As a differential operator with $C^{\infty}$ 
tempered  coefficients (i.e. with all derivatives 
polynomially bounded), $\K$ and its formal adjoint 
$\K^*$ are defined in the Schwartz space 
$\mathscr{S}(\R^{n})$, which is dense 
in $L^p(\R^n)$ ($1\leq p<\infty$). 
On the other hand, since $\K$ and $\K^*$ 
are both closable operators, all estimates we obtain 
in this Section hold naturally in $\mathscr{S}(\R^n)$, 
which can be extended to $L^2(\R^n)$. Hence, we 
do need to distinguish between $\K$ and its closed 
extension in $L^2(\R^n)$.
We now introduce a family of weighted 
Sobolev spaces 
\begin{align}
\S^{\alpha,\beta}=\{u\in \mathscr{S}'(\R^n):\,
\Lambda^{\alpha}\bar{\Lambda}^{\beta}  
u\in L^2(\mathbb{R}^n)\quad \alpha,\beta\in\R\},
\label{wss}
\end{align}
where $\mathscr{S}'(\R^n)$ the space of tempered 
distributions in $\R^n$. The operator $\bar{\Lambda}^{\beta}$ is product operator defined as $\bar{\Lambda}^{\beta}:=(1+\|\bm x\|^2)^{\beta/2}$, while  $\Lambda^{\alpha}$
is a pseudo-differential operator (see
\cite{eckmann1999non,eckmann2003spectral,
herau2004isotropic}) 
that reduces to 
\begin{equation}
\Lambda^2=1-\Delta
\end{equation} 
for $\alpha=2$. 
The weighted Sobolev space \eqref{wss} is equipped 
with the scalar product 
\begin{align*}
\langle  h, g \rangle_{\alpha,\beta}=\langle \Lambda^{\alpha}
\bar{\Lambda}^{\beta} h,\Lambda^{\alpha}\bar{\Lambda}^{\beta} g
\rangle_{L^2}, 
\end{align*}
which induces the Sobolev norm $\|\cdot\|_{\alpha,\beta}$. 
Throughout the paper $\|\cdot\|$ denotes the standard $L^2$ norm. With the above definitions it is possible to prove the following 
important estimate on spectrum of the Kolmogorov operator 
$\K$.
\begin{theorem}[Eckmann and Hairer \cite{eckmann2003spectral}]\label{EH_spec_est}
Let $\K\in\Pol_2^N$ be an operator of the form 
\eqref{general_K} satisfying conditions 1. and 2. 
in Proposition \ref{Lie_cond}. Suppose that the 
closure of $\K$ is a maximal-accretive operator in 
$L^2(\R^n)$ and that for every $\epsilon>0$ there 
are two constants $\delta>0$ and $C>0$ such that
\begin{align}\label{1epsilon_con}
\| u\|_{\delta,\delta}\leq C(\| u\|_{0,\epsilon}+\|\K u\|)
\end{align}
for all $u\in\mathscr{S}(\R^n)$. If, in addition, there 
exist two constants $\delta>0$ and $D>0$ such that 
\begin{align}\label{0epsilon_con}
\|u\|_{0,\epsilon}\leq D(\| u\|+\|\K u\|)
\end{align}
then $\K$ has compact resolvent when considered as an operator 
acting on $L^2(\R^n)$, whose spectrum $\sigma(\K)$ 
is contained in the following cusp-shaped region $\S_{\K}$ of the 
complex plane(see Figure \ref{fig:cusp}):
\begin{align}\label{formula_cusp}
\S_{\K}=\{z\in \mathbb{C}\,:\, \Re z\geq 0,\,\,
|z+1|<(8C_1)^{M/2}(1+\Re z)^{M}\}
\end{align}
for some positive constant $C_1$ and $M\in \mathbb{N}$.
\end{theorem}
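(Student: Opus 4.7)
The plan is to obtain the two conclusions in sequence, treating the compact resolvent as preparation for the more delicate cusp localization.

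I would first merge the two hypotheses into a single subelliptic bound. Substituting \eqref{0epsilon_con} into \eqref{1epsilon_con} eliminates the intermediate $\|\cdot\|_{0,\epsilon}$ norm and yields
\begin{align}
\|u\|_{\delta,\delta} \leq C'\bigl(\|u\| + \|\K u\|\bigr), \qquad u \in \mathscr{S}(\R^n).
\label{plan:hyp}
\end{align}
Because $\K$ is maximal-accretive, any $\mu$ with $\Re \mu < 0$ lies in its resolvent set, so $R_\mu := (\K - \mu)^{-1}\colon L^2(\R^n) \to D(\K)$ is bounded. The estimate \eqref{plan:hyp} upgrades this to a bounded map $L^2(\R^n) \to \S^{\delta,\delta}$. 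Since the inclusion $\S^{\delta,\delta} \hookrightarrow L^2(\R^n)$ gains simultaneously $\delta$ derivatives and $\delta$ powers of $(1+\|\bm x\|^2)^{1/2}$, it is compact by a weighted Rellich--Kondrachov argument. Hence $R_\mu$ is compact on $L^2(\R^n)$, and so $\K$ has compact resolvent; in particular $\sigma(\K)$ consists of isolated eigenvalues of finite multiplicity.

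For the cusp localization, decompose $\K = \K_S + i\K_A$ into its formally symmetric and anti-symmetric parts relative to $L^2(\R^n)$; accretivity says $\K_S \geq 0$. For a normalized eigenfunction $\K u = \lambda u$ I read off
\begin{align}
\Re \lambda = \langle \K_S u, u\rangle, \qquad \Im \lambda = \langle \K_A u, u\rangle.
\end{align}
Combining the factorized form $\K = \sum_i \X_i^* \X_i + \X_0 + f$ of \eqref{general_K} with integration by parts against the polynomially bounded coefficients of $\X_0$ and $f$ produces the first-order bound
\begin{align}
\sum_{i=1}^m \|\X_i u\|^2 \leq \Re \lambda + C_0.
\end{align}
The missing directions are recovered by iterating commutator estimates: each bracket $[\X_{i_1},[\X_{i_2},\ldots,\X_{i_k}]]$ can be controlled by $(1 + \Re \lambda)^{k/2}$ using the standard identities obtained by inserting a resolvent of $\K$ between brackets and absorbing lower-order terms via \eqref{plan:hyp}. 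After at most $M$ bracket levels the non-degeneracy condition of Definition \ref{nond_cond} allows these quantities to dominate every direction appearing in $\|u\|_{\delta,\delta}$, and in particular to dominate $|\langle \K_A u, u\rangle|$. Tracking constants yields $|\Im \lambda| \leq (8 C_1)^{M/2}(1 + \Re \lambda)^M$, and combining with $(\Re \lambda + 1) \leq (1 + \Re \lambda)^M$ for $\Re \lambda \geq 0$ gives the cusp inequality $|\lambda + 1| < (8 C_1)^{M/2}(1 + \Re \lambda)^M$ of \eqref{formula_cusp}.

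The main obstacle is precisely this commutator bootstrap. Every bracket step produces polynomial-coefficient remainders (elements of $\Pol_0^N$) and cross terms coming from interactions between $\X_0$ and the elliptic block $\sum_i \X_i^* \X_i$; all of these must be absorbed by a uniform power of $(1+\Re \lambda)$ while the exponent accumulates at exactly the rate $1/2$ per bracket level. Condition \ref{hypo_cond2} of Proposition \ref{Lie_cond}, together with the non-degeneracy of Definition \ref{nond_cond} at depth $M$, is what forces the iteration to terminate after $M$ steps with the full weighted Sobolev norm on the left-hand side and delivers the sharp exponent $M$ in the cusp formula.
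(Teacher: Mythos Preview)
This theorem is not proved in the paper; it is quoted from Eckmann and Hairer, and the only piece of the argument the text records is the a~priori estimate
\begin{align*}
\tfrac{1}{4}|z+1|^{2/M}\|u\|^2 \le C_1\bigl([1+\Re z]^2\|u\|^2 + \|(\K - z)u\|^2\bigr), \qquad \Re z \ge 0,
\end{align*}
valid for \emph{all} $u$ in the domain and \emph{all} such $z$. It is this estimate, not merely the containment $\sigma(\K)\subset\S_\K$, that the paper actually uses downstream to bound resolvents along the Dunford contour in Theorem~\ref{exp_est_K}.

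Your compact-resolvent step is correct and matches the standard argument. The cusp step, however, has a genuine gap. By working only with an eigenpair $\K u=\lambda u$ you can at best recover the eigenvalue localization, never the uniform resolvent estimate above, so even a successful version of your argument would not supply what the paper subsequently needs. More seriously, the bootstrap you outline is circular: from $\Re\lambda$ you extract control of $\sum_i\|\X_i u\|^2$, but $|\Im\lambda|=|\langle \X_0 u,u\rangle|$ involves a first-order operator with polynomially growing coefficients, and bounding it forces a weighted Sobolev norm of $u$; your combined hypoelliptic bound then returns $\|u\|_{\delta,\delta}\le C(1+|\lambda|)$, with $|\Im\lambda|$ reappearing on the right rather than $(1+\Re\lambda)$ alone. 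The commutator hierarchy you invoke is essentially the mechanism behind the hypotheses \eqref{1epsilon_con}--\eqref{0epsilon_con} themselves, not an independent route from them to the cusp. Breaking the loop is exactly what the Eckmann--Hairer machinery does, and the mechanism is visible in the proof of Theorem~\ref{QKQ_cusp} here: one pairs the hypoelliptic estimate with the fractional-power inequality $\langle u,[(\K+I)^*(\K+I)]^{1/M}u\rangle\le C\|u\|_{\delta,\delta}^2$ (their Lemma~4.5), and it is this interpolation---not commutator counting on eigenfunctions---that converts the subelliptic gain into the exponent $M$ in \eqref{formula_cusp} without circular reference to $|\lambda|$.
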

\noindent
We remark that in 
\cite{eckmann2003spectral} the cusp 
$\S_{\K}$ is defined as $\S_{\K}=\{z\in 
\mathbb{C}\,:\, \Re z\geq 0,\,\,|\Im z|<
(8C_1)^{M/2}(1+\Re z)^{M}\}$. Clearly, $\S_{\K}$ 
in equation \eqref{formula_cusp} is also a valid cusp 
since it can be derived directly from \eqref{est} 
(see, e.g., the proof of Theorem 4.3 in 
\cite{eckmann2003spectral}).
\begin{figure}[t]
\centerline{
\includegraphics[height=8cm]{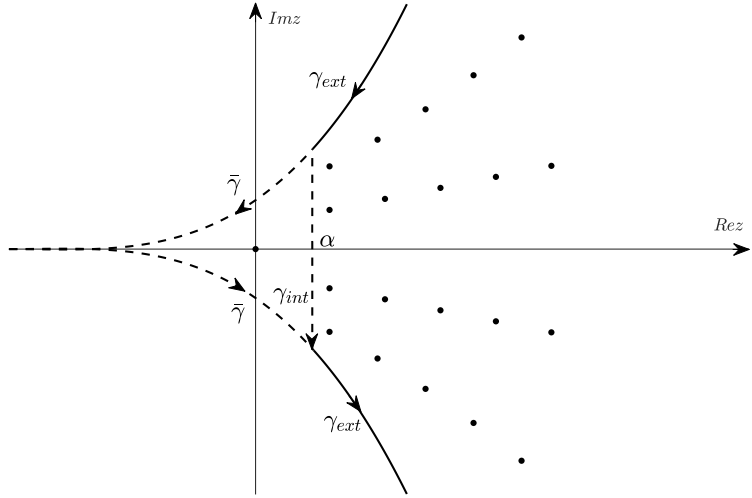}}
\caption{Sketch of the cusp-shaped region of the complex 
plane  enclosing the spectrum of $\K$ and $\Q\K\Q$. The curve 
$\gamma_{ext}$ represents the boundary 
of the cusp. We denote by $\partial\S_{\K}$ the 
curve defined by the union of $\gamma_{ext}$ and $\bar{\gamma}$, 
while $\partial\S'_{\K}$ is defined by the union $\gamma_{ext}$ 
(up to intersection with $\gamma_{int}$) and $\gamma_{int}$.}
\label{fig:cusp} 
\end{figure}
\noindent
One of the key estimates used by 
Eckmann and Hairer in the proof of Theorem \ref{EH_spec_est}
is
\begin{align}\label{est}
\frac{1}{4}|z+1|^{2/M}\|u\|^2\leq C_1\left([1+\Re z]^2\|u\|^2+
\|(\K-z) u\|^2\right), \qquad \forall \Re z\geq 0.
\end{align} 
In a series of papers, H\`erau, Nier and Helffer 
\cite{herau2004isotropic,nier2005hypoelliptic} 
proved that the Kolmogorov operator $\K$ 
corresponding to classical Langevin dynamics 
generates a semigroup $e^{-t\K}$ that decays 
exponentially fast to an equilibrium state. 
Hereafter we show that similar results can be 
obtained for Kolmogorov operators 
in the more general form \eqref{general_K}. 
\begin{theorem}\label{exp_est_K}
Suppose that $\K$ satisfies all conditions in 
Theorem \ref{EH_spec_est}. If the spectrum $\sigma(\K)$ 
of $\K$ in $L^2(\R^n)$ is such that 
\begin{align}\label{K_spec_0}
\sigma(\K)\cap i\R=\{0\},
\end{align} 
then for any $0<\alpha<\min (\Re \sigma(\K)/\{0\})$, 
there exits a positive constant $C=C(\alpha)$ such 
that the estimate
\begin{align}\label{K_estimation}
\|e^{-t\K} u_0-\pi_0u_0\|\leq Ce^{-\alpha t}\|u_0\|
\end{align}
holds for all $ u_0\in L^2(\R^n)$ and for all $t>0$, where 
$\pi_0$ is the spectral projection onto the kernel of $\K$. 
\end{theorem}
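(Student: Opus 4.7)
The plan is to leverage the compactness of the resolvent of $\K$ established in Theorem \ref{EH_spec_est} to obtain discreteness of $\sigma(\K)$, and then combine this with the spectral gap hypothesis \eqref{K_spec_0} and a Gearhart--Pr\"uss type argument. First, since $\K$ has compact resolvent, $\sigma(\K)$ consists entirely of isolated eigenvalues of finite algebraic multiplicity inside the cusp $\S_\K$. The assumption \eqref{K_spec_0} singles out $0$ as the only eigenvalue on the imaginary axis, so
\begin{align}
\mu := \min\Re\bigl(\sigma(\K)\setminus\{0\}\bigr) > 0.
\end{align}
Let $\pi_0$ denote the Riesz spectral projection associated with $\{0\}$, set $X := \text{Ran}(\I - \pi_0)$, and let $\K^\perp := \K\vert_X$. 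Since $\K$ is maximal accretive, $-\K$ generates a $C_0$ contraction semigroup on $L^2(\R^n)$ that commutes with $\pi_0$; restricting to $X$ yields a $C_0$ semigroup $e^{-t\K^\perp}$ on the Hilbert space $X$ with $\sigma(\K^\perp) = \sigma(\K)\setminus\{0\} \subset \{\Re z \geq \mu\}$. Because $\pi_0 u_0 \in \ker\K$, we have $e^{-t\K}\pi_0 u_0 = \pi_0 u_0$, so that $e^{-t\K}u_0 - \pi_0 u_0 = e^{-t\K^\perp}(\I - \pi_0)u_0$, and \eqref{K_estimation} reduces to proving $\|e^{-t\K^\perp}\|_{X\to X} \leq C(\alpha)e^{-\alpha t}$ for every $\alpha \in (0,\mu)$.

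To establish this decay I would invoke the Gearhart--Pr\"uss theorem on Hilbert spaces, which equates the exponential growth bound of a $C_0$ semigroup with the abscissa of uniform boundedness of the resolvent of its generator. The claim $\omega_0(-\K^\perp) \leq -\alpha$ is therefore equivalent to the uniform resolvent estimate
\begin{align}\label{plan:resbd}
\sup_{\Re w < \alpha}\bigl\|(\K^\perp - w)^{-1}\bigr\|_{X\to X} < \infty.
\end{align}
I would verify \eqref{plan:resbd} by splitting the half-plane $\{\Re w < \alpha\}$ into a bounded region $D_R := \{\Re w \leq \alpha,\, |w| \leq R\}$ and its unbounded complement. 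On $D_R$ the only potential spectral value of $\K$ is $0$, and this point has been removed by restricting to $X$, so $D_R \subset \rho(\K^\perp)$ and $w \mapsto (\K^\perp - w)^{-1}$ is continuous, hence bounded, on the compact set $D_R$. On the complement, the Eckmann--Hairer estimate \eqref{est} gives the required decay: for $\Re w \in [0,\alpha]$ and $|w|$ sufficiently large the point $w$ lies outside the cusp $\S_\K$ because $(1+\Re w)^M \leq (1+\alpha)^M$ remains bounded while $|w+1|\to\infty$, so \eqref{est} yields $\|(\K-w)^{-1}\| \lesssim |w+1|^{-1/M}$. For $\Re w < 0$ I would combine the accretivity identity $\|(\K - w)u\|^2 \geq \|(\K - i\Im w)u\|^2 + (\Re w)^2\|u\|^2$ with \eqref{est} applied at $z = i\Im w$ to obtain the analogous decay as $|\Im w|\to\infty$.

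The main obstacle is controlling the transition zone where $\Re w$ is close to $0$ and $|w|$ is only moderate, since one must rule out any hidden accumulation of the resolvent norm coming from eigenvalues lying near the imaginary axis. Here the compact resolvent property is decisive: it forces $\sigma(\K)$ to be discrete, so that removing $\{0\}$ via $\pi_0$ leaves a strictly positive spectral gap $\mu$, and the restricted resolvent $(\K^\perp - w)^{-1}$ is uniformly bounded on every compact subset of $\{\Re w < \mu\}$. Once \eqref{plan:resbd} is established, Gearhart--Pr\"uss yields a constant $C(\alpha) > 0$ with $\|e^{-t\K^\perp}\|_{X\to X} \leq C(\alpha)e^{-\alpha t}$, and the identity $e^{-t\K}u_0 - \pi_0 u_0 = e^{-t\K^\perp}(\I - \pi_0)u_0$ delivers \eqref{K_estimation}.
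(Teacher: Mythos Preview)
Your argument is correct and takes a genuinely different route from the paper. The paper writes down an explicit Dunford contour integral
\[
e^{-t\K}u_0-\pi_0u_0=\frac{1}{2\pi i}\int_{\partial \S'_{\K}}e^{-tz}(z-\K)^{-1}u_0\,dz,
\]
where $\partial\S'_{\K}$ consists of a vertical segment $\gamma_{int}$ at $\Re z=\alpha$ together with a piece $\gamma_{ext}$ of the cusp boundary; it then bounds the integral along $\gamma_{int}$ using the elementary resolvent estimate $\|(\K-(\alpha+iy))^{-1}\|\leq (C_\alpha^2+y^2)^{-1/2}$ and the integral along $\gamma_{ext}$ using the cusp bound $\|(\K-z)^{-1}\|\leq (1+\Re z)^{-1}$, obtaining a $t^{-1}e^{-\alpha t}$ contribution there. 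You instead invoke Gearhart--Pr\"uss after restricting to $X=\operatorname{Ran}(\I-\pi_0)$, reducing everything to the uniform resolvent bound \eqref{plan:resbd}, which you verify with essentially the same ingredients (compactness on the bounded part, the Eckmann--Hairer cusp estimate \eqref{est} for large $|w|$, and accretivity for $\Re w<0$). Your approach is cleaner and more conceptual for this statement in isolation; the paper's explicit contour computation, on the other hand, is what sets up the subsequent Corollary~\ref{Taylor_K_conv}, where the same contour is reused to bound $\|e^{-t\K}\K^n\|$ via the identity $e^{-t\K}\K-\pi_0\K=\frac{1}{2\pi i}\int_{\partial\S'_\K}z e^{-tz}(z-\K)^{-1}dz$ --- a refinement that does not fall out of Gearhart--Pr\"uss directly.
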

\begin{proof}
The Kolmogorov operator $\K$ is closed, maximal-accretive 
and densely defined in $L^2(\R^n)$.  Hence, by the  
Lumer-Phillips theorem, the semigroup $e^{-t\K}$ is a contraction
in $L^2(\R^n)$. 
It was shown in \cite{eckmann2003spectral,herau2004isotropic} 
that the core of $\K$ is the Schwartz space, and that 
the hypoelliptic estimate \eqref{est} holds 
for any $u\in L^2(\R^n)$. 
According to Theorem \ref{EH_spec_est}, $\K$ only has a 
discrete spectrum, i.e., $\sigma(\K)=\sigma_{dis}(\K)$. 
Condition \eqref{K_spec_0} requires that $\lambda=0$
is the only eigenvalue on the immaginary axis $i\R$. 
This condition, together with the von-Neumann theorem (see the proof of Theorem 6.1 
in \cite{nier2005hypoelliptic}), allows us to obtain 
a weakly convergent Dunford integral \cite{reed1975ii} representation 
of the semigroup $e^{-t\K}$ given by 
\begin{align}\label{Dunford_integal}
e^{-t\K}u_0-\pi_0u_0=\frac{1}{2\pi i}
\int_{\partial \S'_{\K}}e^{-tz}(z-\K)^{-1} u_0dz,
\end{align}
where $\partial \S'_{\K}=\gamma_{int}\, \cup\, \gamma_{ext}$ 
is the union of the two curves shown in Figure \ref{fig:cusp}, 
and $(z-\K)^{-1}$ is the resolvent of $\K$. 
Weak convergence is 
relative to the inner product
\begin{equation}
\langle (e^{-t\K}-\pi_0)u_0,\phi\rangle = \frac{1}{2\pi i}
\int_{\partial \S'_{\K}}\langle e^{-tz}(z-\K)^{-1} u_0,\phi\rangle dz
\end{equation}
for $u_0\in L^2(\R^n)$ and $\phi\in D(\K^*)$.
Equation \eqref{Dunford_integal} allows 
us to formulate the semigroup estimation problem as an 
estimation problem involving an integral in the complex plane. 
In particular, to derive the upper bound \eqref{K_estimation}, 
we just need an upper bound for the norm of resolvent $(z-\K)^{-1}$. 
To derive such bound, we notice that for all 
$z\not\in\S_{\K}$, where $\S_{\K}$ is the cusp \eqref{formula_cusp}, 
and $\Re z\geq 0$, we have $|z+1|^{2/M}\geq (8C_1)(1+\Re z)^2$.  
A substitution of this inequality into \eqref{est} yields, for all 
$u\in L^2(\R^n)$
\begin{align*}
\frac{1}{8}|z+1|^{2/M}\| u\|^2&\leq C_1\|(\K-z)u\|^2, \qquad 
\forall \Re z\geq 0,z\not\in\S_{\K}.
\end{align*}
Hence,  $\|(\K-z)^{-1}\|\leq \sqrt{8C_1}|z+1|^{-1/M}$. 
Next, we rewrite the Dunford integral \eqref{Dunford_integal} as 
\begin{align}
\frac{1}{2\pi i}\int_{\partial \S'_{\K}}e^{-tz}(z-\K)^{-1}u_0dz=
\frac{1}{2\pi i}\int_{\gamma_{int}}e^{-tz}(z-\K)^{-1}u_0dz+
\frac{1}{2\pi i}\int_{\gamma_{ext}}e^{-tz}(z-\K)^{-1}u_0dz.
\label{Dunf1}
\end{align}
Since $(\K-z)^{-1}$ is a compact linear operator, 
we have that for any $0<\alpha<\min (\Re \sigma(\K)/\{0\})$ 
there exits a constant $C_{\alpha}>0$ such that 
$\|(\K-\alpha) u\|\geq C_{\alpha}\|u\|$. On the other hand, 
$\K$ is also a real operator, which implies that 
for all complex numbers $z=(\alpha+iy)\not\in\sigma(\K)$, 
we have 
\begin{align} 
\|(\K-(\alpha+iy)) u\|^2&=\|(\K-\alpha) u\|^2+
y^2\|u\|^2\geq (C_{\alpha}^2+y^2)\| u\|^2,\nonumber
\end{align}
i.e., 
\begin{align}
\|(\K-(\alpha+iy))^{-1} u\|&\leq\frac{1}{\sqrt{C_{\alpha}^2+y^2}}\|u\|\label{for:Reso_bound}.
\end{align}
This suggests that the resolvent $(\K-z)^{-1}$ is uniformly 
bounded by $1/C_{\alpha}$ along the line $\gamma_{int}$, which leads to 
\begin{align}\label{est1}
\left\|\frac{1}{2\pi i}\int_{\gamma_{int}}
e^{-tz}(z-\K)^{-1}u_0dz\right\|\leq Ce^{-\alpha t}\|u_0\|.
\end{align}
The boundary $\gamma_{ext}$ is defined by all complex numbers 
$z=x+iy$  such that $|z+1|=(8C_1)^{M/2}(1+\Re z)^{M}$. 
Also, if $z\not\in\S_{\K}$ then the norm 
of the resolvent is bounded by $\|(\K-z)^{-1}\|\leq \sqrt{8C_1}|z+1|^{-1/M}=(x+1)^{-1}$. Combining these two inequalities 
yields
\begin{align}\label{est2}
\left\|\frac{1}{2\pi i}\int_{\gamma_{ext}}
e^{-tz}(z-\K)^{-1}u_0dz\right\|
&\leq C\|u_0\|\int_{\gamma_{ext}}^{\infty}
e^{-tx}(1+x)^{-1}dz\nonumber\\
&\leq C\|u_0\|\int_{\alpha}^{\infty}e^{-tx}dx\leq
C\|u_0\|\frac{e^{-\alpha t}}{t}\qquad t>0.
\end{align}
At this point we recall that $e^{-t\K}$ is a dissipative 
semigroup and that $\pi_0$ is a projection operator 
into the kernel of $\K$. This allows us to write 
$\|e^{-t\K}u_0-\pi_0u_0\|=
\|e^{-t\K}(u_0-\pi_0u_0)\|\leq \|u_0-\pi_0u_0\|$.
By combining this inequality with \eqref{Dunford_integal}, \eqref{Dunf1}, 
\eqref{est1} and \eqref{est2} we see that there exists 
a constant $C=C(\alpha)$ such that 
\begin{align}\label{Taylor_K}
\|e^{-t\K} u_0-\pi_0u_0\|\leq Ce^{-\alpha t}\| u_0\|.
\end{align}
This completes the proof.

\end{proof}

\noindent
In the following Corollary we derive an upper 
bound for the norm of the derivatives of the 
semigroup $e^{-t\K}$.
\begin{corollary}
\label{Taylor_K_conv}
Suppose that $\K$ satisfies all conditions listed in 
Theorem \ref{exp_est_K}. Then for any $t> 0$, 
the $n$-th order time derivative of the 
semigroup $e^{-t\K}$ satisfies
\begin{align}\label{power_K}
\|e^{-t\K}\K^nu_0\|\leq\left(\|\pi_0\K\|+B\left(\frac{t}{n}\right)\right)^n\|u_0\|,
\end{align}
where
\begin{equation}
 B(t)= Ce^{-\alpha t}\left[1+\frac{1}{t}+\frac{1}{t^2}+
\cdots\frac{1}{t^{M}}\right],
\label{Bt}
\end{equation}
$C$ is a positive constant,  $\alpha$ and $\pi_0$ 
are defined in Theorem \ref{exp_est_K}, and $M$ is the constant 
defining the cusp \eqref{formula_cusp}.
\end{corollary}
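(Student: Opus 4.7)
The plan is to bootstrap the semigroup estimate of Theorem \ref{exp_est_K} by exploiting the commutativity of $\K$ with its own semigroup. Since $e^{-t\K}$ and $\K$ commute on the appropriate domain, and $e^{-t\K}=(e^{-(t/n)\K})^n$, one can factorize
\begin{equation*}
e^{-t\K}\K^n u_0 = \bigl(e^{-(t/n)\K}\K\bigr)^n u_0,
\end{equation*}
so that $\|e^{-t\K}\K^n u_0\|\leq \|e^{-(t/n)\K}\K\|^n\,\|u_0\|$. It therefore suffices to establish the pointwise bound
\begin{equation*}
\|e^{-s\K}\K\|\leq \|\pi_0\K\| + B(s) \qquad \forall s>0,
\end{equation*}
and then substitute $s=t/n$.

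For the pointwise bound I would decompose $e^{-s\K} = \pi_0 + (e^{-s\K}-\pi_0)$ and use the triangle inequality to reduce the task to proving $\|(e^{-s\K}-\pi_0)\K\|\leq B(s)$. Multiplying the Dunford representation \eqref{Dunford_integal} on the right by $\K$ and using the algebraic identity $(z-\K)^{-1}\K = z(z-\K)^{-1}-I$ gives
\begin{equation*}
(e^{-s\K}-\pi_0)\K = \frac{1}{2\pi i}\int_{\partial \S'_{\K}} e^{-sz}\bigl[z(z-\K)^{-1}-I\bigr]\,dz = \frac{1}{2\pi i}\int_{\partial \S'_{\K}} z\,e^{-sz}(z-\K)^{-1}\,dz,
\end{equation*}
where the $-I$ piece drops out because $\int_{\partial \S'_{\K}} e^{-sz}\,dz = 0$ for $s>0$ (close the contour at $\Re z=+\infty$ and invoke Cauchy's theorem together with the decay of $e^{-sz}$).

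Next I would estimate the resulting integral along $\gamma_{int}$ and $\gamma_{ext}$ separately, mimicking the derivation of \eqref{est1} and \eqref{est2}. On the finite segment $\gamma_{int}$ the resolvent bound \eqref{for:Reso_bound} together with $|z|/\sqrt{C_\alpha^2+y^2}\leq C$ yields a contribution bounded by $Ce^{-\alpha s}$. On $\gamma_{ext}$, the cusp equation \eqref{formula_cusp} gives $\|(z-\K)^{-1}\|\leq (1+\Re z)^{-1}$ and $|z|\leq C(1+\Re z)^M$, while $|dz|\leq C(1+\Re z)^{M-1} d(\Re z)$; substituting and applying the standard estimates $\int_\alpha^\infty e^{-sx}(1+x)^{k}\,dx \leq Ce^{-\alpha s}/s^{k+1}$ for $k=0,\ldots,M-1$ produces a sum of terms of the form $Ce^{-\alpha s}/s^{j}$ with $j=0,1,\ldots,M$, which is exactly $B(s)$ as in \eqref{Bt}. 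Plugging $s=t/n$ and raising to the $n$-th power concludes the proof.

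The main obstacle is the $\gamma_{ext}$ estimate: inserting $\K$ under the Dunford integral multiplies the integrand by a factor of order $|z|\sim (1+\Re z)^M$, so one must track both the resolvent decay and the cusp Jacobian $|dz|$ with care to extract precisely the polynomial weights $1,1/s,\ldots,1/s^M$ appearing in $B(s)$ rather than a weaker bound. The remaining steps (commutativity, Cauchy vanishing of the $I$-term, $\gamma_{int}$ bound) are essentially direct extensions of arguments already present in the proof of Theorem \ref{exp_est_K}.
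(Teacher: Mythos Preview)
Your proposal follows essentially the same route as the paper: factor $e^{-t\K}\K^n=(e^{-(t/n)\K}\K)^n$, bound $\|e^{-s\K}\K-\pi_0\K\|$ via the Dunford integral $\frac{1}{2\pi i}\int_{\partial\S'_\K} ze^{-sz}(z-\K)^{-1}\,dz$ (obtained from the resolvent identity $(z-\K)^{-1}\K=z(z-\K)^{-1}-I$ and Cauchy's theorem to kill the $-I$ term), and then split into $\gamma_{int}$ and $\gamma_{ext}$ contributions exactly as in the proof of Theorem~\ref{exp_est_K}. The only ingredient the paper adds that you leave implicit is a short argument that $\pi_0\K$ is a bounded operator (using the finite-rank structure of $\pi_0$ and duality $\langle \K f,v_i\rangle=\langle f,\K^*v_i\rangle$), which is needed for the triangle inequality $\|e^{-s\K}\K\|\leq\|\pi_0\K\|+\|(e^{-s\K}-\pi_0)\K\|$ to produce a finite operator-norm bound.
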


\begin{proof}
By combining the resolvent identity $(z-\K)^{-1}\K=z(z-\K)^{-1}-\I$
with the Cauchy integral representation theorem and 
the Dunford integral representation \eqref{Dunford_integal} 
we obtain, for all $t>0$,
\begin{align}
e^{-t\K}\K-\pi_0\K 
&=\frac{1}{2\pi i}\int_{\partial \S'_{\K}}
e^{-tz}(z-\K)^{-1}\K  dz\nonumber\\
&=\frac{1}{2\pi i}\int_{\partial \S'_{\K}}ze^{-tz}(z-\K)^{-1} 
dz.
\label{dunford_power}
\end{align} 
As before, we split the integral along $\partial \S'_{\K}$ into 
the sum of two integrals (see equation \eqref{Dunf1})
\begin{align}
e^{-t\K}\K-\pi_0\K
=\frac{1}{2\pi i}\left(\int_{\gamma_{int}}ze^{-tz}(z-\K)^{-1}dz+\int_{\gamma_{ext}}ze^{-tz}(z-\K)^{-1} dz\right).
\label{dunford_power2}
\end{align} 
If $z=x+iy$ is in $\gamma_{int}$ then we have that 
$|z|$ is bounded by constant. By using the uniform 
boundedness of the resolvent \eqref{for:Reso_bound} 
we obtain
\begin{align}\label{esti1}
\left\|\frac{1}{2\pi i}\int_{\gamma_{int}}e^{-tz}z(z-\K)^{-1}
dz\right\|\leq Ce^{-\alpha t}.
\end{align}
To derive an upper bound for the second integral in
\eqref{dunford_power2}, we notice that if $z=x+iy$ is in 
$\gamma_{ext}$, then $|z|<|z+1|=(8C_1)^{M/2}(1+x)^M$ 
and $\|(z-\K)^{-1}\|\leq\sqrt{8C_1}|z+1|^{-1/M}=(1+x)^{-1}$. 
A substitution of these estimates into the second integral 
at the right hand side of \eqref{dunford_power2} yields, 
for all $t>0$, 
\begin{align}\label{esti2}
\left\|\frac{1}{2\pi i}\int_{\gamma_{ext}}ze^{-tz}
(z-\K)^{-1} dz\right\|
&\leq C\int_{\alpha}^{\infty}e^{-tx}(1+x)^{M-1}dx\nonumber\\
&\leq Ce^{-\alpha t}\left[\frac{1}{t}+\frac{1}{t^2}+
\cdots\frac{1}{t^{M}}\right]\nonumber\\
&\leq
\underbrace{Ce^{-\alpha t}\left[1+\frac{1}{t}+\frac{1}{t^2}+
\cdots\frac{1}{t^{M}}\right]}_{B(t)}.
\end{align}
Combining \eqref{esti1} and \eqref{esti2} we conclude that 
the Dunford integral \eqref{dunford_power} is bounded 
by $B(t)$. Since $\K$ has a compact resolvent, if 
there is any zero eigenvallue then it must have finite 
algebraic multiplicity (Theorem 6.29, Page 187 in 
\cite{kato2013perturbation}). This implies that the 
projection operator $\pi_0$ is a finite rank operator that 
admits the canonical form (in $L^2$)
\begin{equation}
\pi_0=\sum_{i=1}^n\alpha_i\langle \cdot,v_i\rangle u_i.
\end{equation}
On the other hand, since 
\begin{align}
\pi_0\K f=&\sum_{i=1}^n\alpha_i\langle \K f,v_i\rangle u_i\nonumber\\
=&\sum_{i=1}^n\alpha_i\langle f,\K^* v_i\rangle u_i, \nonumber
\end{align}
where $\K^*$ is the $L^2$-adjoint of $\K$, 
we have 
\begin{equation}
\|\pi_0\K\|\leq \sum_{i=1}^n|\alpha_i|\|\K^*v_i\|\|u_i\|.
\end{equation}
Hence, $\pi_0\K$ is a bounded operator. 
By using the Dunford integral representation 
over $\partial\S_{\K}$ it is straightforward to show that 
$e^{-t\K}\K$ is also a bounded operator for $t>0$. 
Combining these results with the triangle inequality, 
we have that for any fixed $t>0$ and any $n\in\mathbb{N}$
\begin{align}
\left|\left\|e^{-t\K/n}\K  \right\|-
\left\|\pi_0\K  \right\|\right|
\leq \left\|e^{-t\K/n}\K -\pi_0\K 
\right\|\leq B\left(\frac{t}{n}\right). 
\end{align} 
Finally, by using the operator identity $e^{-t\K}\K^n=(e^{-t\K/n}\K)^n$ we obtain
\begin{align}
\left\|e^{-t\K}\K^n \right\|\leq
\left\|e^{-t\K/n}\K \right\|^n\leq 
\left(B\left(\frac{t}{n}\right)+\left\|\pi_0\K \right\|\right)^n 
\qquad t>0,
\end{align} 
which completes the proof. 

\end{proof}

\noindent
The inequality \eqref{power_K} suggests the flow defined by 
the semigroup $e^{-t\K}$ has bounded 
derivatives in time. We emphasize that the estimate 
\eqref{power_K} is not sufficent to prove the 
convergence of the formal power series expansion of 
$e^{-t\K}$ since 
\begin{equation}
\lim_{n\rightarrow\infty}\frac{\|e^{-t\K}\K^n\|}{n!}\neq 0.
\end{equation} 
 
\subsection{Analysis of the projected Kolmogorov operator}
\label{sec:QKQ}

In this section we analyze the semigroup 
$e^{-t\Q\K\Q}$ generated by the operator $\Q\K\Q$, 
where $\K$ is the Kolmogorov operator \eqref{KI},  
$\P$ and $\Q=\I-\P$ are projection operators in $L^2(\R^n)$. Such semigroup appears 
in the EMZ memory and fluctuation terms  
(see Eqs. \eqref{gle_full}, \eqref{SFD} and \eqref{f}). 
In principle, the projection operator $\P$ and therefore the 
complementary projection $\Q$ can be chosen arbitrarily 
\cite{zhu2018estimation,chorin2002optimal}. 
Here we restrict our analysis to finite-rank symmetric
projections in $L^2(\R^n)$. Mori's projection \eqref{Mori_P} 
is one of such projections. 

\begin{theorem}\label{QKQ_cusp}
Let  $\P: L^2(\R^n)\rightarrow L^2(\R^n)$ be a 
finite-rank, symmetric projection operator. If $\K$ satisfies all
conditions listed in Theorem \ref{EH_spec_est}, then the 
operator $\Q\K\Q$ is also maximal accretive and has a compact 
resolvent. Moreover, the spectrum of $\Q\K\Q$ lies within 
the cusp
\begin{align}
\S_{\Q\K\Q}=\{z\in \mathbb{C}| \Re z\geq 0,|z+1|<(8C_{\Q})^{M_{\Q}/2}(1+\Re z)^{M_{\Q}}\}
\label{cusp1}
\end{align}
for some the positive constants $C_{\Q}$ 
and integer $M_{\Q}$.
\end{theorem}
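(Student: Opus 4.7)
The plan is to reduce Theorem \ref{QKQ_cusp} to a hypoelliptic estimate on $\Q\K\Q$ that mirrors \eqref{est}. Write
\[
\Q\K\Q = \K - R, \qquad R := \P\K + \K\P - \P\K\P.
\]
The critical observation is that $R$ is a bounded finite-rank operator on $L^2(\R^n)$. Indeed, any finite-rank symmetric projection admits the representation $\P=\sum_{i}\langle e_i,\cdot\rangle e_i$ for some orthonormal basis $\{e_i\}$ of its range; provided the basis functions lie in $\text{dom}(\K)\cap\text{dom}(\K^*)$---which is automatic for Mori's projection \eqref{Mori_P} with smooth $u_i$---the operators $\K\P v=\sum_i\langle e_i,v\rangle\K e_i$ and $\P\K v=\sum_i\langle \K^* e_i,v\rangle e_i$ are both bounded and finite-rank on $L^2(\R^n)$, hence so is $R$.

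Next I would derive the hypoelliptic estimate for $\Q\K\Q$. For $u$ in the common core $\mathscr{S}(\R^n)$, the decomposition $(\K-z)u=(\Q\K\Q-z)u+Ru$ yields
\[
\|(\K-z)u\|^2 \leq 2\|(\Q\K\Q-z)u\|^2 + 2\|R\|^2\|u\|^2.
\]
Substituting into \eqref{est} and using $(1+\Re z)^2\geq 1$ for $\Re z\geq 0$ to absorb $2C_1\|R\|^2\|u\|^2$ into a multiple of $C_1(1+\Re z)^2\|u\|^2$, I obtain
\[
\tfrac{1}{4}|z+1|^{2/M_\Q}\|u\|^2 \leq C_\Q\bigl((1+\Re z)^2\|u\|^2 + \|(\Q\K\Q-z)u\|^2\bigr)
\]
with $M_\Q = M$ and $C_\Q = 2C_1(1+\|R\|^2)$. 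This is precisely the ingredient needed to enclose $\sigma(\Q\K\Q)$ inside the cusp $\S_{\Q\K\Q}$ of \eqref{cusp1}: for $z\notin\S_{\Q\K\Q}$ with $\Re z\geq 0$, the cusp condition gives $|z+1|^{2/M_\Q}\geq 8C_\Q(1+\Re z)^2$, so the estimate collapses to $\|u\|\leq\sqrt{8C_\Q}|z+1|^{-1/M_\Q}\|(\Q\K\Q-z)u\|$, yielding injectivity with a bounded inverse on the range of $\Q\K\Q-z$. The closed-range and Fredholm argument in \cite{eckmann2003spectral} underlying Theorem \ref{EH_spec_est} then applies verbatim.

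The remaining conclusions follow directly. Accretivity is immediate from $\Q=\Q^*=\Q^2$: $\langle\Q\K\Q u,u\rangle=\langle\K\Q u,\Q u\rangle\geq 0$. Maximal accretivity then follows by Lumer-Phillips, since the resolvent bound above produces a nonempty resolvent set containing the left half-plane. For the compact resolvent, combining \eqref{1epsilon_con} and \eqref{0epsilon_con} gives $\|u\|_{\delta,\delta}\leq C'(\|u\|+\|\K u\|)$; substituting $\|\K u\|\leq\|\Q\K\Q u\|+\|R\|\|u\|$ yields $\|u\|_{\delta,\delta}\leq C''(\|u\|+\|\Q\K\Q u\|)$, and the compact Sobolev embedding $\S^{\delta,\delta}\hookrightarrow L^2(\R^n)$ makes $(\Q\K\Q+\I)^{-1}$ compact, so $\sigma(\Q\K\Q)$ is purely discrete. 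The main technical hurdle is the boundedness of $R$; once the range of $\P$ is known to lie in $\text{dom}(\K)\cap\text{dom}(\K^*)$, the hypoelliptic machinery of \cite{eckmann2003spectral} carries over essentially unchanged, with only the cusp constants (but not the exponent $M$) modified.
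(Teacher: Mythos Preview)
Your proof is correct and rests on the same core observation as the paper---that $R=\K-\Q\K\Q=\P\K+\K\P-\P\K\P$ is bounded on $L^2(\R^n)$ whenever $\P$ is finite-rank with range in $\text{dom}(\K)\cap\text{dom}(\K^*)$---but your derivation of the cusp estimate takes a shorter path. The paper first establishes the hypoelliptic Sobolev bound $\|u\|_{\delta,\delta}\leq C(\|u\|+\|\Q\K\Q u\|)$ and then re-invokes the abstract machinery (Lemma~4.5 of \cite{eckmann2003spectral} and Proposition~B.1 of \cite{nier2005hypoelliptic}) to produce a cusp of a possibly different exponent $M_\Q$. You instead substitute $\|(\K-z)u\|\leq\|(\Q\K\Q-z)u\|+\|R\|\|u\|$ directly into \eqref{est}, absorbing the extra $\|R\|^2\|u\|^2$ term into the $(1+\Re z)^2\|u\|^2$ term; this is more elementary and gives the sharper conclusion $M_\Q=M$ with only the constant inflated. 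One ordering caveat: your appeal to ``the closed-range and Fredholm argument'' to pass from the a~priori bound to $z\in\rho(\Q\K\Q)$ presupposes a nonempty resolvent set, so it is cleanest to secure maximal accretivity first---either as a bounded perturbation of the maximal accretive $\K$, or, as the paper does, by noting that $\Q\K\Q$ is closed and densely defined and that both $\Q\K\Q$ and $(\Q\K\Q)^*=\Q\K^*\Q$ are accretive---then deduce compactness of the resolvent, and finally conclude $\sigma(\Q\K\Q)\subset\S_{\Q\K\Q}$ from the injectivity estimate.
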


\begin{proof}
We first show that if $\K$ is closely defined and 
maximal accretive, and that $\Q\K\Q$ has the same 
properties. According to the Lumer-Phillips theorem 
\cite{engel1999one}, the adjoint of a maximal-accretive 
operator is accretive, and therefore
\begin{align*}
\Re\langle\K f,f\rangle&\geq 0 \qquad \forall f\in D(\K),\\
\Re\langle\K^*f,f\rangle&\geq 0 \qquad \forall f\in D(\K^*).
\end{align*}
Here  $D(\K)$ and  $D(\K^*)$ denote the domain
of the linear operators $\K$ and $\K^*$, respectively 
(see, e.g., \cite{VenturiSpectral}). 
On the other hand, if $\P$ is a symmetric operator in 
$L^2(\R^n)$ then $\Q=\I-\P$ is also symmetric. This 
implies that
\begin{align*}
\Re\langle\Q\K\Q f,f\rangle&=\Re\langle\K\Q f,\Q f\rangle\geq 0
\qquad \forall f\in D(\K),\\
\Re\langle(\Q\K\Q)^* f,f\rangle&=
\Re\langle\K^*\Q f,\Q f\rangle\geq 0 \qquad \forall f\in D(\K^*),
\end{align*}
i.e., $\Q\K\Q$ and its adjoint $\Q\K^*\Q$ are 
both maximal-accretive. $\Q\K\Q$ is also a closable operator 
defined in $D(\K)$. This can be seen by decomposing it 
as $\Q\K\Q=\K-\K\P-\P\K\Q$. In fact, if 
$\K$ is a closed operator then $\Q\K\Q$ is 
closed since $\K\P$ and $\P\K\Q$ are bounded 
\cite{kato2013perturbation}, as we shall see hereafter.
By using the Lumer-Philips theorem, we conclude 
that $\Q\K\Q$ is also maximal accretive, and its closure 
generates a contraction semigroup $e^{-t\Q\K\Q}$ 
in $L^2(\R^n)$.
Next, we show that if $\K$ satisfies the hypoelliptic 
estimate $\| u\|_{\delta,\delta}\leq 
C(\| u\|+\|\K u\|)$, 
then so does $\Q\K\Q$, i.e.,
\begin{align}\label{QLQ_est4}
\|u\|_{\delta,\delta}\leq C(\|u\|+\|\Q\K\Q  u\|).
\end{align}
By using triangle inequality we obtain
\begin{align*}
\| u\|_{\delta,\delta}
\leq C(\|u\|+\|\K u\|)
\leq C(\|u\|+\|\K\P  u\|+\|\Q\K\Q u\|+\|\P\K\Q  u\|).
\end{align*}
To prove \eqref{QLQ_est4}, it is sufficient to show 
that $\K\P$ and $\P\K\Q$ are bounded operators 
in $L^2(\R^n)$. To this end, we recall that any 
finite-rank projection admits the canonical 
representation
\begin{equation}
\P=\sum_{i=1}^m\lambda_i\langle \cdot,\phi_i\rangle\varphi_i,
\end{equation} 
where $\{\phi_i\}_{i=1}^m$ and $\{\varphi_i\}_{i=1}^m$ 
are elements $L^2(\R^n)$. This implies that
\begin{align}
\|\K\P  u\|
&=\left\|\sum_{i=1}^m\lambda_i\langle u,
\phi_i\rangle\K\varphi_i\right\|\leq \sum_{i=1}^m |\lambda_i| 
\|\K\varphi_i\|\|\phi_i\|\| u\|= C\| u\|,\\
\|\P\K\Q  u\|
&=\left\|\sum_{i=1}^m\lambda_i\langle\K\Q  u,
\phi_i\rangle\varphi_i\right\|
=\left\|\sum_{i=1}^m\lambda_i\langle  u,\Q\K^*\phi_i\rangle\varphi_i\right\|
\leq \sum_{i=1}^m|\lambda_i|\|\Q\K^*\phi_i
\|\|\varphi_i\|\|u\|=C\| u\|.
\end{align}
This proves that $\K\P$ and $\P\K\Q$ are both 
bounded linear operators. 
At this point we notice that if $\Q\K\Q$ is accretive, 
then $(\Q\K\Q+\I)$ invertible. Moreover, since $\K$ 
is accretive we have
\begin{align*}
\|(\Q\K\Q+\I) u\|^2=\|\Q\K\Q\|^2+2\Re\langle\K\Q u,\Q u\rangle+\|u\|^2\geq \|\Q\K\Q\|^2+\|u\|^2.
\end{align*}
This implies that 
\begin{align*}
\|u\|_{\delta,\delta}\leq C(\| u\|+\|\Q\K\Q  u\|)\leq 
\sqrt{2}C\|(\Q\K\Q+\I) u\|\Rightarrow 
\|(\Q\K\Q+\I)^{-1}u\|_{\delta,\delta}\leq 
\sqrt{2}C\| u\|, 
\end{align*}
i.e., $(\Q\K\Q+\I)^{-1}$ is a bounded operator from 
$L^2$ into the weighted Sobolev space 
$\S^{\delta,\delta}$ defined in \eqref{wss}. 
At this point we recall that $S^{\delta,\delta}$ is compactly 
embedded into $L^2$ (Lemma 3.2 \cite{eckmann2003spectral}).
Hence, $(\Q\K\Q+\I)^{-1}$ is compact from $L^2$ into $L^2$ 
and therefore $\Q\K\Q$ has a compact resolvent 
\cite{kato2013perturbation}. 
To prove that the discrete spectrum of $\Q\K\Q$ 
lies within the cusp $\S_{\Q\K\Q}$ defined in \eqref{cusp1}, 
we follow the procedure outlined in \cite{eckmann2003spectral}. 
To this end, let $\K\in\Pol_2^N$. Then, for 
$\delta=\max\{2,N\}$ we have the bound
\begin{align*}
\|(\K+\I) u\|\leq C\|u\|_{\delta,\delta} \qquad \forall u\in
\mathscr{S}_n,
\end{align*}
and 
\begin{align*}
\|(\Q\K\Q+\I) u\|\leq\|\Q\|(\|\K u\|+\|\K\P u\|)+
\| u\|\leq C(\|\K u\|+\|u\|)\leq 
\sqrt{2}C\|(\K+\I) u\|\leq C\| u\|_{\delta,\delta}.
\end{align*}
Recall that $\Q\K\Q: D(\Q\K\Q)\rightarrow L^2(\R^{2d})$ 
is maximally accretive. Therefore, by Lemma 4.5 
in \cite{eckmann2003spectral}, for all $\delta>0$ we 
can find an integer $M_{\Q}>0$ and a constant $C$ 
such that 
\begin{align}\label{QLQ_est2}
\langle  u, [(\Q\K\Q+\I)^*(\Q\K\Q+\I)]^{1/M_{\Q}}\rangle
\leq C\| u\|^2_{\delta,\delta}.
\end{align}
By using the hypoelliptic estimate \eqref{QLQ_est2}, 
\eqref{QLQ_est4}, Proposition B.1 in \cite{nier2005hypoelliptic} 
and the triangle inequality we obtain
\begin{align*}
\frac{1}{4}|z+1|^{2/M_{\Q}}\|u\|^2&\leq 
C\|u\|^2_{\delta,\delta}+\|(\Q\K\Q-z)\|^2\\
&\leq C_{\Q}([1+\Re z]^2\|u\|^2+\|(\Q\K\Q-z) u\|^2).
\end{align*}
This result, together with 
the compactness of the resolvent of $\Q\K\Q$, implies 
that if $z\in\sigma(\Q\K\Q)$ (spectrum of $\Q\K\Q$) 
then 
\begin{align*}
\frac{1}{8}|z+
1|^{2/M_{\Q}}\| u\|^2<\frac{1}{4}|z+
1|^{2/M_{\Q}}\| u\|^2\leq C_{\Q}(1+\Re z)^2\|u\|^2.
\end{align*}
This proves that the spectrum of $\Q\K\Q$ is contained in the 
cusp-shaped region $\S_{\Q\K\Q}$ defined in equation \eqref{cusp1}. 
If $z\not\in\S_{\Q\K\Q}$, then we have resolvent estimate
\begin{align}\label{QKQ_reso}
\|(\Q\K\Q-z)^{-1}\|\leq \sqrt{8C_{\Q}}|z+1|^{-1/M_{\Q}}.
\end{align}

\end{proof}

\paragraph*{Remark} 
The main assumption at the basis of 
Theorem \ref{QKQ_cusp} is that $\P$ is a finite-rank 
symmetric projection.  Mori's projection \eqref{Mori_P} 
is one of such projections. If $\P$ is of finite-rank 
then both $\K\P$ and $\P\K\Q$ are bounded operators, 
which yields the hypoelliptic estimate \eqref{1epsilon_con}.
On the other hand, if $\P$ is an infinite-rank projection, 
e.g., Chorin's projection  
\cite{zhu2018estimation,chorin2000optimal,Chorin1,
zwanzig1973nonlinear}, then $\K\P$ and $\P\K\Q$ 
may not be bounded. Whether Theorem \ref{QKQ_cusp} 
holds for infinite-rank projections is an open question. 
\\

\noindent
With the resolvent estimate \eqref{QKQ_reso} 
available, we can now prove the analog of 
Theorem \ref{exp_est_K} and Corollary 
\ref{Taylor_K_conv}, with $\K$ replaced 
by $\Q\K\Q$. These results establish exponential 
relaxation to equilibrium of $e^{-t\Q\K\Q}$ and 
the regularity of the EMZ orthogonal dynamics 
induced by $e^{-t\Q\K\Q}$.

\begin{theorem}\label{exp_est_QKQ}
Assume that $\K$ satisfies all conditions listed in 
Theorem \ref{EH_spec_est}. Let 
$\P: L^2(\R^n)\rightarrow L^2(\R^n)$ 
be a symmetric finite-rank projection operator. 
If the spectrum of $\Q\K\Q$ in $L^2(\R^n)$ 
satisfies
\begin{align}\label{QKQ_spec_0}
\sigma(\Q\K\Q)\cap i\R\subseteq\{0\},
\end{align} 
then for any $0<\alpha_{\Q}<
\min (\Re \sigma(\Q\K\Q)/\{0\})$ there exits 
a positive constant $C=C(\alpha_{\Q})$ such that
\begin{align}\label{QKQ_estimation}
\|e^{-t\Q\K\Q}u_0-\pi^{\Q}_0u_0\|\leq 
Ce^{-\alpha_{\Q} t}\| u_0\|,
\end{align}
for all $ u_0\in L^2(\mathbb{R}^n)$ and for all $t>0$, 
where $\pi^{\Q}_0$ is the spectral projection 
onto the kernel of $\Q\K\Q$.
\end{theorem}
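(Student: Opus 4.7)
The plan is to mimic the argument used for Theorem \ref{exp_est_K}, replacing $\K$ throughout by $\Q\K\Q$ and invoking the structural properties of $\Q\K\Q$ that were established in Theorem \ref{QKQ_cusp}. Concretely, since Theorem \ref{QKQ_cusp} shows that $\Q\K\Q$ is maximal accretive, densely defined, and has compact resolvent, the Lumer--Phillips theorem gives that $e^{-t\Q\K\Q}$ is a contraction semigroup on $L^2(\R^n)$ and its spectrum is purely discrete and confined to the cusp $\S_{\Q\K\Q}$. The assumption \eqref{QKQ_spec_0} then guarantees that $0$ is the only possible spectral point of $\Q\K\Q$ on the imaginary axis, so (as in the proof of Theorem 6.1 in \cite{nier2005hypoelliptic}) we may use the von-Neumann theorem to obtain a weakly convergent Dunford representation
\begin{align}
e^{-t\Q\K\Q}u_0-\pi^{\Q}_0u_0=\frac{1}{2\pi i}\int_{\partial \S'_{\Q\K\Q}}e^{-tz}(z-\Q\K\Q)^{-1}u_0\,dz,
\end{align}
where $\partial \S'_{\Q\K\Q}=\gamma^{\Q}_{int}\cup\gamma^{\Q}_{ext}$ is the contour analogous to the one depicted in Figure \ref{fig:cusp}, with the vertical segment $\gamma^{\Q}_{int}$ chosen at real part $\alpha_{\Q}$ so as to separate $0$ from the rest of $\sigma(\Q\K\Q)$.

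Next I would estimate the two pieces of the contour integral separately. On $\gamma^{\Q}_{ext}$, the resolvent bound \eqref{QKQ_reso} from Theorem \ref{QKQ_cusp} gives $\|(\Q\K\Q-z)^{-1}\|\leq (1+\Re z)^{-1}$, which when combined with $|dz|\lesssim (1+\Re z)^{M_{\Q}-1}d(\Re z)$ (the parametrization of the cusp boundary) yields, exactly as in \eqref{est2},
\begin{align}
\left\|\frac{1}{2\pi i}\int_{\gamma^{\Q}_{ext}}e^{-tz}(z-\Q\K\Q)^{-1}u_0\,dz\right\|\leq C\|u_0\|\frac{e^{-\alpha_{\Q} t}}{t}\qquad t>0.
\end{align}
On $\gamma^{\Q}_{int}=\{\alpha_{\Q}+iy\,:\,y\in\R\}$, I would use that $\Q\K\Q$ inherits the property of being a real operator from $\K$ (since $\P$, being symmetric and finite-rank, may without loss of generality be realized with real kernel elements), together with the compactness of the resolvent, to obtain an $\alpha_{\Q}$-dependent lower bound $\|(\Q\K\Q-\alpha_{\Q})u\|\geq C_{\alpha_{\Q}}\|u\|$ and hence the analogue of \eqref{for:Reso_bound}:
\begin{align}
\|(\Q\K\Q-(\alpha_{\Q}+iy))^{-1}u\|\leq\frac{1}{\sqrt{C_{\alpha_{\Q}}^2+y^2}}\|u\|.
\end{align}
This yields an $L^1$ bound in $y$ for the resolvent on $\gamma^{\Q}_{int}$ and, multiplied against $|e^{-tz}|=e^{-\alpha_{\Q} t}$, gives a contribution of order $Ce^{-\alpha_{\Q} t}\|u_0\|$.

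Combining the two contour contributions yields the desired inequality \eqref{QKQ_estimation}. The main technical obstacle, and the step I would treat most carefully, is the bound on $\gamma^{\Q}_{int}$: the argument in Theorem \ref{exp_est_K} relied crucially on $\K$ being real so that $\|(\K-(\alpha+iy))u\|^2=\|(\K-\alpha)u\|^2+y^2\|u\|^2$. Extending this identity to $\Q\K\Q$ requires that $\Q\K\Q$ itself be real, which in turn needs $\P$ to be representable with real kernel functions $\phi_i,\varphi_i$ in its canonical expansion. For Mori's projection \eqref{Mori_P} this is automatic since the observables $u_i$ and the weight $\rho$ are real; in the general finite-rank symmetric case it can be arranged by decomposing $\P$ into its real and imaginary parts, both of which are again finite-rank symmetric projections and can be treated separately. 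Once this technicality is handled, the rest of the proof follows the template of Theorem \ref{exp_est_K} essentially verbatim, with dissipativity of $e^{-t\Q\K\Q}$ and the projection property of $\pi^{\Q}_0$ used in the final step to replace the constant appearing in the crude contour estimate.
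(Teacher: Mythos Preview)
Your proposal is correct and follows exactly the approach the paper indicates: the paper itself omits the proof, stating that it ``closely follows the proof of Theorem~\ref{exp_est_K}'' with $\K$ replaced by $\Q\K\Q$ and the structural input supplied by Theorem~\ref{QKQ_cusp}, which is precisely the template you lay out (Dunford contour, split into $\gamma^{\Q}_{int}$ and $\gamma^{\Q}_{ext}$, resolvent bound \eqref{QKQ_reso} on the exterior piece, real-operator identity on the interior piece).

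One small caveat on your closing remark: the claim that the real and imaginary parts of a general finite-rank symmetric projection are themselves projections is false (e.g.\ the real part of a rank-one projection onto a complex vector is typically a rank-two operator that does not square to itself), so that decomposition does not salvage the argument for complex $\P$. This does not affect the theorem as stated in the paper, since the intended applications---Mori's projection \eqref{Mori_P} built from real observables and a real weight $\rho$---produce a genuinely real $\P$, making $\Q\K\Q$ real and the identity $\|(\Q\K\Q-(\alpha_{\Q}+iy))u\|^2=\|(\Q\K\Q-\alpha_{\Q})u\|^2+y^2\|u\|^2$ valid; you already note this. For a truly general symmetric finite-rank $\P$ one would need a different route to the $\gamma^{\Q}_{int}$ bound, but the paper does not attempt this either.
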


\begin{corollary} 
\label{powerC}
Suppose that $\P$ and $\K$ satisfy all 
conditions listed in Theorem \ref{exp_est_QKQ}. 
Then for any $t> 0$ the $n$-th order derivative 
of the semigroup $e^{-t\Q\K\Q}$ satisfies
\begin{align}\label{power_QKQ}
\|e^{-t\Q\K\Q}(\Q\K\Q)^nu_0\|\leq\left(\|\pi_0(\Q\K\Q)\|+B_{\Q}\left(\frac{t}{n}\right)\right)^n\|u_0\|,
\end{align}
where the function $B_{\Q}(t)$ has the same form as  
\eqref{Bt}, with $\alpha$ replaced by $\alpha_{\Q}$ 
and $M$ replaced by $M_\Q$.
\end{corollary}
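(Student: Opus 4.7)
The plan is to mirror the proof of Corollary \ref{Taylor_K_conv} with $\K$ replaced by $\Q\K\Q$ throughout, using Theorem \ref{QKQ_cusp} and Theorem \ref{exp_est_QKQ} as the inputs analogous to Theorem \ref{EH_spec_est} and Theorem \ref{exp_est_K}. All the structural ingredients are already in place: the cusp containment $\sigma(\Q\K\Q)\subset\S_{\Q\K\Q}$ from \eqref{cusp1}, the resolvent bound \eqref{QKQ_reso}, compactness of the resolvent of $\Q\K\Q$, and the Dunford contour $\partial\S'_{\Q\K\Q}=\gamma_{int}\cup\gamma_{ext}$ already used in the proof of Theorem \ref{exp_est_QKQ}.

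The first step is the single-power estimate. Starting from the Dunford representation of $e^{-t\Q\K\Q}-\pi_0^\Q$ (the analog of \eqref{Dunford_integal}), I would apply the resolvent identity $(z-\Q\K\Q)^{-1}(\Q\K\Q)=z(z-\Q\K\Q)^{-1}-\I$, with the constant piece killed by Cauchy's theorem, to obtain
\begin{align*}
e^{-t\Q\K\Q}(\Q\K\Q)-\pi_0^\Q(\Q\K\Q)=\frac{1}{2\pi i}\int_{\partial\S'_{\Q\K\Q}} z\, e^{-tz}(z-\Q\K\Q)^{-1}\,dz.
\end{align*}
I then split across $\gamma_{int}$ and $\gamma_{ext}$. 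On $\gamma_{int}$, $|z|$ is bounded and the resolvent is uniformly bounded by $1/\sqrt{C_{\alpha_\Q}^2+y^2}$ via the decomposition $\|(\Q\K\Q-(\alpha_\Q+iy))u\|^2=\|(\Q\K\Q-\alpha_\Q)u\|^2+y^2\|u\|^2$, yielding a contribution of order $Ce^{-\alpha_\Q t}$. On $\gamma_{ext}$, the defining equation of the curve gives $|z|<(8C_\Q)^{M_\Q/2}(1+x)^{M_\Q}$, while \eqref{QKQ_reso} gives $\|(z-\Q\K\Q)^{-1}\|\leq(1+x)^{-1}$; the resulting integrand $e^{-tx}(1+x)^{M_\Q-1}$ integrates by parts $M_\Q$ times to $Ce^{-\alpha_\Q t}\sum_{k=1}^{M_\Q}t^{-k}$. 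Summing, the integral is bounded by $B_\Q(t)$.

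Next, $\pi_0^\Q(\Q\K\Q)$ is bounded: since $\Q\K\Q$ has compact resolvent (Theorem \ref{QKQ_cusp}), any zero eigenvalue has finite algebraic multiplicity, so $\pi_0^\Q$ admits a finite-rank representation $\pi_0^\Q=\sum_{i=1}^k\alpha_i\langle\cdot,v_i\rangle u_i$, and moving $\Q\K\Q$ onto its $L^2$-adjoint yields $\|\pi_0^\Q(\Q\K\Q)\|\leq\sum_i|\alpha_i|\|(\Q\K\Q)^*v_i\|\|u_i\|<\infty$, exactly as in the proof of Corollary \ref{Taylor_K_conv}. Combining these two bounds by the triangle inequality gives, for every $s>0$,
\begin{align*}
\|e^{-s\Q\K\Q}(\Q\K\Q)\|\leq\|\pi_0^\Q(\Q\K\Q)\|+B_\Q(s).
\end{align*}
Setting $s=t/n$ and invoking the semigroup identity $e^{-t\Q\K\Q}(\Q\K\Q)^n=(e^{-(t/n)\Q\K\Q}(\Q\K\Q))^n$ together with submultiplicativity of the operator norm then delivers \eqref{power_QKQ}.

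The main obstacle is the uniform resolvent bound along $\gamma_{int}$: the proof of Corollary \ref{Taylor_K_conv} exploits that $\K$ is a real operator so that squared norms decompose via Pythagoras along vertical complex lines. I would need to verify the analogous property for $\Q\K\Q$, which holds provided the symmetric projection $\P$ commutes with complex conjugation. This is the case for Mori's projection \eqref{Mori_P}, whose generators $u_i(0)$ are real-valued and whose weighted inner product is real, so $\Q=\I-\P$ is a real operator and $\Q\K\Q$ inherits the needed property. Once this verification is made, the remainder of the argument is a direct transliteration of the $\K$-case.
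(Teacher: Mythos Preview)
Your proposal is correct and matches the paper's approach exactly: the paper simply states that the proof of Corollary \ref{powerC} closely follows that of Corollary \ref{Taylor_K_conv} with $\K$ replaced by $\Q\K\Q$, and omits all details. Your additional verification that $\Q\K\Q$ inherits the real-operator property (needed for the Pythagorean resolvent decomposition along $\gamma_{int}$) is a legitimate technical point that the paper glosses over.
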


The proofs of Theorem \ref{exp_est_QKQ} and Corollary \ref{powerC} closely follow the proofs of 
Theorem \ref{exp_est_K} and Corollary
\ref{Taylor_K_conv}. Therefore we omit them.
The semigroup estimate \eqref{QKQ_estimation} allows 
us to prove exponential convergence to the 
equilibrium state of the EMZ memory kernel and 
fluctuation force. Specifically, we have the following:

\begin{corollary} \label{cor:K_conver}
Consider a scalar observable  $u(t)=u(\bm x(t))$ 
with initial condition $u(0)=u_0$, and let $\P(\cdot)=\langle(\cdot),u_0\rangle u_0$ be a one-dimensional 
Mori's projection \eqref{Mori_P}. Then the EMZ memory 
kernel \eqref{SFD} converges exponentially fast 
to the equilibrium state $\langle\Q\K^*u_0,\pi^{\Q}_0\K u_0\rangle$, with rate $\alpha_{\Q}$. 
In other words, there exists 
a positive constant $C$ such that 
\begin{align}\label{kernel_est}
| K(t)-\langle\Q\K^*u_0,\pi^{\Q}_0\K u_0\rangle|\leq 
C e^{-\alpha_{\Q} t}.
\end{align}
\end{corollary}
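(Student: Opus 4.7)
The plan is to reduce the claim to a direct application of Theorem~\ref{exp_est_QKQ}. Specializing \eqref{SFD} to $M=1$ and $\P(\cdot)=\langle\cdot,u_0\rangle u_0$, and adopting the $e^{-t\Q\K\Q}$ sign convention of this section, the memory kernel takes the form
\[
K(t)=\langle u_0,\K e^{-t\Q\K\Q}\Q\K u_0\rangle.
\]
Because $\mathrm{range}(\Q)$ is invariant under the generator $\Q\K\Q$, it is also invariant under the semigroup it generates, so $e^{-t\Q\K\Q}\Q\K u_0=\Q\,e^{-t\Q\K\Q}\Q\K u_0$. Passing one copy of $\Q$ through the inner product (using its symmetry) and taking the adjoint of $\K$ yields the compact representation
\[
K(t)=\langle \Q\K^* u_0,\;e^{-t\Q\K\Q}\Q\K u_0\rangle.
\]

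Next, I would apply Theorem~\ref{exp_est_QKQ} with the fixed initial datum $\Q\K u_0\in L^2(\R^n)$, obtaining
\[
\|e^{-t\Q\K\Q}\Q\K u_0-\pi^{\Q}_0\Q\K u_0\|\leq Ce^{-\alpha_{\Q}t}\|\Q\K u_0\|.
\]
A direct application of the Cauchy--Schwarz inequality then produces
\[
\bigl|K(t)-\langle \Q\K^* u_0,\pi^{\Q}_0\Q\K u_0\rangle\bigr|\leq C\|\Q\K^* u_0\|\,\|\Q\K u_0\|\,e^{-\alpha_{\Q}t},
\]
which already delivers exponential decay at rate $\alpha_{\Q}$.

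It remains to identify the limit $\langle \Q\K^* u_0,\pi^{\Q}_0\Q\K u_0\rangle$ with the equilibrium value $\langle \Q\K^* u_0,\pi^{\Q}_0\K u_0\rangle$ stated in the corollary. I would decompose $\K u_0=\P\K u_0+\Q\K u_0$ and use two observations: first, $\P\K u_0\in\ker(\Q\K\Q)$ (since $\Q\P=0$), so $\pi^{\Q}_0\P\K u_0=\P\K u_0$ because the Riesz spectral projector $\pi^{\Q}_0$ restricts to the identity on the geometric kernel of $\Q\K\Q$; second, $\langle\Q\K^* u_0,\P\K u_0\rangle=\langle\K^* u_0,\Q\P\K u_0\rangle=0$ again by $\Q\P=0$. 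Adding these facts gives $\langle\Q\K^* u_0,\pi^{\Q}_0\K u_0\rangle=\langle\Q\K^* u_0,\pi^{\Q}_0\Q\K u_0\rangle$, so the two expressions for the equilibrium coincide.

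The point that requires care is conceptual rather than computational: one must verify that $\mathrm{range}(\Q)$ is invariant under the semigroup $e^{-t\Q\K\Q}$ (not merely under its generator), and that the Riesz projector $\pi^{\Q}_0$ acts as the identity on $\ker(\Q\K\Q)$. Both are routine consequences of the compactness of the resolvent established in Theorem~\ref{QKQ_cusp} together with standard Riesz projector theory, so no new analytic input is required beyond the semigroup estimate~\eqref{QKQ_estimation}.
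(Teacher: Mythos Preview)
Your argument is correct and follows essentially the same route as the paper: rewrite $K(t)$ via symmetry of $\Q$ and the adjoint of $\K$ as an inner product of $\Q\K^*u_0$ against the orthogonal-dynamics orbit, then apply the semigroup estimate \eqref{QKQ_estimation} and Cauchy--Schwarz. The paper applies the estimate directly to $\K u_0$ (implicitly using the same $\Q\P=0$ observation you spell out), whereas you apply it to $\Q\K u_0$ and identify the limits afterward; this is only a cosmetic reordering, and your version is in fact more explicit about why the two expressions for the equilibrium coincide.
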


\begin{proof}
A substitution of \eqref{QKQ_estimation} into \eqref{SFD} and 
subsequent application of Cauchy-Schwartz inequality
yields 
\begin{align}
|K(t)-\langle\Q\K^*u_0,\pi^{\Q}_0\K u_0\rangle|&
=|\langle u_0,\K e^{t\Q\K\Q}\Q\K u_0\rangle-\langle\Q\K^*u_0,\pi^{\Q}_0\K u_0\rangle|\nonumber\\
&=|\langle \Q\K^*u_0,e^{t\Q\K\Q}\K u_0\rangle-\langle\Q\K^*u_0,\pi^{\Q}_0\K u_0\rangle|\nonumber\\
&\leq C\|\Q\K^*u_0\|\|\K u_0\|e^{-\alpha_{\Q}t}.
\label{form_gen_K(t)}
\end{align}

\end{proof}

\noindent
It is straightforward to 
generalize Corollary \ref{cor:K_conver} to matrix-valued 
memory kernels \eqref{SFD} and obtain the following 
exponential convergence result
\begin{align}\label{matrixK}
\|\bm K(t)-\bm G^{-1}\bm C^{\Q}\|_{\mathscr{M}}\leq C\|\bm G^{-1}\bm D^{\Q}\|_{\mathscr{M}}e^{-\alpha_{\Q}t},
\end{align}
where $\|\cdot\|_{\mathscr{M}}$ denotes any matrix norm and  
$\bm G$ is the Gram matrix \eqref{gram}. Also, 
the matrix $\bm C^{\Q}$ has entries 
$C^{\Q}_{ij}=\langle\Q\K^*u_i(0),\pi^{\Q}_0\K u_j(0)\rangle$, 
while $D^{\Q}_{ij}=\|\Q\K u_i(0)\|\|\K u_j(0)\|$.  The proof of \eqref{matrixK} follows immediately 
from the following inequality
\begin{align}\label{est22}
\langle u_i(0),\K e^{t\Q\K\Q}\Q\K u_j(0)\rangle-\langle\Q\K^*u_i(0),\pi^{\Q}_0\K u_j(0)\rangle
&=\langle \Q\K^*u_i(0), e^{t\Q\K\Q}\K u_j(0)-
\pi^{\Q}_0\K u_j(0)\rangle\nonumber\\
&\leq C \|\Q\K u_i(0)\|\|\K u_j(0)\|e^{-\alpha_{\Q}t}.
\end{align}
In fact, a substitution of  \eqref{est22} into \eqref{SFD} 
yields \eqref{matrixK}. Similarly, we can prove that 
the fluctuation term \eqref{f} reaches the equilibrium state
exponentially fast in time. If we choose the 
initial condition as $\bm u_0=\Q\K \bm u_{0}$ then 
for all $j=1,...,m$, we have
\begin{align}\label{QKQ_estimation_force}
\|f_j(t)-\pi^{\Q}_0\Q\K u_{j}(0)\|=\|e^{-t\Q\K\Q}\Q\K u_{j}(0)-
\pi^{\Q}_0\Q\K u_{j}(0)\|\leq Ce^{-\alpha_{\Q} t}\|\Q\K u_{j}(0)\|.
\end{align}
Let us now introduce the tensor product space 
$\displaystyle V=\otimes_{i=1}^mL^2(\R^n)$ 
and the following norm 
\begin{align}\label{prod_norm}
\|\bm r(t)\|_{V}:=\left\|\left(\|r_1(t)\|,\|r_2(t)\|,\cdots,\|r_m(t)\|\right)\right\|_{\mathscr{M}},
\end{align}
where $\|\cdot\|$ is the standard $L^2(\R^n)$ norm, 
and $\|\cdot\|_\mathscr{M}$ is any matrix norm.  
Then from \eqref{QKQ_estimation_force} it 
follows that
\begin{align}
\| \bm f(t)-\pi^{\Q}_0\Q\K \bm u_0\|_{V}\leq Ce^{-\alpha_{\Q}t}\|\Q\K \bm u(0)\|_V.
\end{align}

\section{An application to Langevin dynamics}
\label{sec:app}

All results we obtained so far can be applied to 
stochastic differential equations of the 
form \eqref{eqn:sde}, provided the MZ 
projection operator is of finite-rank. 
In this section, we study in detail the 
Langevin dynamics of an interacting 
particle system widely used in statistical 
mechanics to model liquids and gasses \cite{lei2016data,Risken}, 
and show that the EMZ memory kernel \eqref{SFD} and 
fluctuation term \eqref{f} decay exponentially 
fast in time to a unique equilibrium state. 
Such state is defined by the projector operator 
$\pi_0^{\Q}$ appearing 
in Theorem \ref{exp_est_QKQ} and 
Corollary \ref{cor:K_conver}. 
Hereafter we will determine the exact 
expression of such projector for a system of 
interacting identical particles modeled 
by the following SDE in $\R^{2d}$ 
\begin{align}\label{eqn:LE}
\begin{dcases}
\frac{d\bm q}{dt }=\frac{1}{\mu}\bm p,\\
\frac{d\bm p}{dt}=-\nabla V(\bm q)-\frac{\gamma}{\mu}\bm p+\sigma \bm \xi(t),
\end{dcases}
\end{align}
where $\mu$ is the mass of each particle, 
$V(\bm q)$ is the interaction potential and 
$\bm \xi (t)$ is a $d$-dimensional Gaussian white 
noise process modeling physical Brownian motion.    
The parameters $\sigma$ and $\gamma$ represent, 
respectively, the amplitude of the fluctuations and 
the viscous dissipation coefficient. Such parameters  
are linked by the fluctuation-dissipation relation 
$\sigma=(2\gamma/\beta)^{1/2}$, where $\beta$ 
is proportional to the inverse of the thermodynamic 
temperature. The stochastic dynamical system 
\eqref{eqn:LE} is widely used in statistical mechanics 
to model the mesoscopic dynamics of liquids and gases.
Letting the mass $\mu$ in \eqref{eqn:LE} 
go to zero, and setting $\gamma=1$ yields 
the so-called overdamped Langevin dynamics, i.e., 
Langevin dynamics where no average acceleration 
takes place.
The (negative) Kolmogorov operator \eqref{KI} associated with 
the SDE \eqref{eqn:LE} is given by
\begin{align}\label{L:LE}
\K=-\frac{\bm p}{\mu}\cdot\nabla_{\bm q}+
\nabla_{\bm q}V(\bm q)\cdot\nabla_{\bm p}+
\gamma\left(\frac{\bm p}{\mu}\cdot\nabla_p-\frac{1}{\beta}\Delta_p\right),
\end{align}
where ``$\cdot$'' denotes the standard dot product. 
If the interaction potential $V(\bm q)$ is strictly positive 
at infinity then the Langevin equation \eqref{eqn:LE} 
admits an unique invariant Gibbs measure 
given by
\begin{equation}
\rho_{eq}(\bm p,\bm q)=\frac{1}{Z} e^{-\beta H(\bm p,\bm q)},
\end{equation}
where 
\begin{equation}
H(\bm p,\bm q)=\frac{\|\bm p\|_2^2}{2\mu}+V(\bm q), 
\end{equation}
is the Hamiltonian and $Z$ is the partition function. 
%
%
At this point we introduce the unitary transformation 
$\U:L^2(\R^{2d}) \rightarrow L^2(\R^{2d},\rho_{eq}) $ 
defined by 
\begin{align}\label{unitary_t}
(\U g)(\bm p,\bm q)=\sqrt{Z}e^{\beta H(\bm p,\bm q)/2}g(\bm p,\bm q),
\end{align}
where $L^2(\R^{2d};\rho_{eq})$ is a weighted Hilbert space 
endowed with the inner product 
\begin{equation}
\langle h,g\rangle_{\rho_{eq}}=\int h(\bm p,\bm q)  g(\bm p,\bm q) 
\rho_{eq}(\bm p,\bm q) d\bm pd\bm q.
\end{equation}
The linear transformation \eqref{unitary_t} is an 
isometric isomorphism  between 
the spaces $L^2(\R^{2d})$ and $L^2(\R^{2d};\rho_{eq})$.
In fact, for any $\tilde{u}\in L^2(\R^{2d})$, there exists 
a unique $u\in L^2(\R^{2d};\rho_{eq})$ such 
that $\tilde{u}=(e^{-\beta H/2}/\sqrt{Z})u$ 
and 
\begin{align}\label{iso}
\|\tilde u\|_{L^2}=\|u\|_{L^{2}_{eq}}.
\end{align}
By applying \eqref{unitary_t} to \eqref{L:LE} we 
construct the transformed Kolmogorov operator 
$\tK=\U^{-1}\K\U$, which has the explicit expression
\begin{align}\label{tK}
\tK=-\frac{\bm p}{\mu}\cdot\nabla_{\bm q}+\nabla V(\bm q)\cdot
\nabla_{\bm p}+\frac{\gamma}{\beta}\left(-\nabla_{\bm p}+
\frac{\beta}{2\mu}\bm p\right)\cdot\left(\nabla_{\bm p}+
\frac{\beta}{2\mu}\bm p\right).
\end{align}
This operator can be written in the canonical 
form \eqref{general_K} as
\begin{equation}
\tK= \sum_{i=1}^d \X_i^*\X_i-\X_0,
\label{tKseries}
\end{equation}
provided we set  
\begin{equation}
\begin{dcases}
\X_0=\frac{\bm p}{\mu}\cdot\nabla_{\bm q}-\nabla V(\bm q)\cdot\nabla_{\bm p},\\
\X_i=\sqrt{\frac{\gamma}{\beta}}\left(\partial_{p_i}+\frac{\beta}{2\mu}p_i\right), \\
\X_i^*=\sqrt{\frac{\gamma}{\beta}}\left(-\partial_{p_i}+\frac{\beta}{2\mu}p_i\right).
\end{dcases}
\label{X_i_Langevin}
\end{equation}
Note that $\X_0$ is skew-symmetric in 
$L^2(\R^{2d})$. Also, $\X_i^*$ and $\X_i$ 
can be interpreted as creation and annihilation 
operators, similarly to a harmonic quantum 
oscillator \cite{Justin}. 
The Kolmogorov operator $\tK$ and its formal adjoint $\tK^*$ 
are both accretive, closable and with maximally accretive 
closure in  $L^2(\R^{2d})$ (see, e.g.,  \cite{herau2004isotropic,nier2005hypoelliptic,
eckmann2000non})
Similar to the Kolmogorov operator $\tK=\U^{-1}\K\U$, 
we can transform the MZ projection operators $\P$ 
and $\Q$ into operators in the ``flat'' Hilbert space 
$L^2(\R^{2d})$ as $\tilde{\P}=\U^{-1}\P\U$ and 
$\tilde{\Q}=\U^{-1}\Q\U$. The relationship 
between $L^2(\R^{2d})$, $L^{2}(\R^{2d};\rho_{eq})$ 
and the operators defined between such spaces can 
be summarized by the following commutative diagram
\begin{equation}
\begin{tikzpicture}
  \matrix (m) [matrix of math nodes,row sep=3em,column sep=4em,minimum width=2em]
  {
     L^2(\R^{2d}) & L^2(\R^{2d};\rho_{eq}) \\
     L^2(\R^{2d}) & L^2(\R^{2d};\rho_{eq}) \\};
  \path[-stealth]
    (m-1-1) edge node [left] {$\tilde{\P},\tilde{\K},\tilde{\Q}$} (m-2-1)
            edge node [above] {$\U$} (m-1-2)        
    (m-2-2.west|-m-2-1) edge node [below] {$\U^{-1}$}
             (m-2-1)
    (m-1-2) edge node [right] {$\P,\K,\Q$} (m-2-2)
            ;
\end{tikzpicture}
\nonumber
\end{equation}
The properties of all operators in $L^2(\R^{2d})$ 
and $L^2(\R^{2d};\rho_{eq})$ are essentially the same 
since $\U$ is a bijective isometry. For instance if $\P$ is 
compact and symmetric then $\tP$ is also a compact 
and symmetric operator.

Next, we apply the analytical results we obtained in 
Section \ref{sec:K} and Section \ref{sec:QKQ} to the 
particle system described by the SDE \eqref{eqn:LE}. 
To this end, we just need to verify whether
$\tK$ is a poly-H\"ormander operator, i.e., if 
the operators  $\{\X_i\}_{i=0}^{d}$ appearing in 
\eqref{tKseries}-\eqref{X_i_Langevin} satisfy 
the poly-H\"ormander conditions in 
Proposition \ref{Lie_cond} and the estimate in
Theorem \ref{EH_spec_est} (see Section \ref{sec:K}).
This can be achieved by imposing additional 
conditions on the particle interaction potential $V(\bm q)$
(see \cite[Proposition 3.7]{eckmann2000non}).  
In particular, following Helffer and Nier \cite{nier2005hypoelliptic}, 
we assume that $V(\bm q)$ satisfies the following weak ellipticity 
hypothesis

\begin{hyp}\label{Hypo:V(q)}
The particle interaction potential $V(\bm q)$ 
is of class $C^{\infty}(\R^{d})$, and 
for all $\bm q\in \R^{d}$ it satisfies the following conditions:
\begin{enumerate}
\item $\forall \bm \alpha\in\mathbb{N}^d$ such that $|\bm \alpha|=1$, 
$|\partial_{\bm q}^{\bm \alpha}
V(\bm q)|\leq C_{\bm \alpha}\sqrt{1+\|\nabla V(\bm q)\|^2}$ 
for some positive constant $C_{\bm \alpha}$,
\item There exists $M\in \mathbb{N}$, and $C \geq 1$, such that 
$C^{-1}(1+\|\bm q\|^2)^{1/(2M)}\leq \sqrt{1+\|\nabla V(\bm q)\|^2}\leq 
C(1+\|\bm q\|^2)^{M/2}$.
\end{enumerate}
\end{hyp}

\noindent
Hypothesis \ref{Hypo:V(q)} holds for any particle 
interaction potential that grows at most polynomially at infinity, i.e., 
$V(\bm q)\simeq\|\bm q\|^{M}$ as  $\left\|\bm q\right\|\rightarrow\infty$. 
With this hypothesis, it is possible to prove the following

\begin{prop}[Helffer and Nier \cite{nier2005hypoelliptic}]
\label{prop1}
Consider the Langevin equation \eqref{eqn:LE} with particle 
interaction potential $V(\bm q)$ satisfying Hypothesis 
\ref{Hypo:V(q)}. Then the operator $\tK$ defined in \eqref{tK} has 
a compact resolvent, and a discrete spectrum bounded by the cusp 
$\S_{\K}$. Moreover, there exists a positive constant 
$C$ such that the estimate
\begin{align}\label{K_estimation11}
\|e^{-t\tK}u_0-\tilde{\pi}_0\tilde{u}_0\|\leq Ce^{-\alpha t}\|\tilde{u}_0\|
\end{align}
holds for all $\tilde{ u}_0\in L^2(\R^{2d})$ and for 
all $t>0$, where $\tilde{\pi}_0$ is the orthogonal projection 
onto the kernel of $\tK$ in $L^2(\R^{2d})$.
\end{prop}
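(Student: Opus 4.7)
The plan is to recognize Proposition \ref{prop1} as a direct specialization of Theorem \ref{exp_est_K} to the transformed Langevin generator $\tilde{\K}$, so the task reduces to verifying, one by one, the hypotheses of that theorem (which in turn calls upon Proposition \ref{Lie_cond} and Theorem \ref{EH_spec_est}). I would organize the verification in four steps: (i) poly-H\"ormander structure of $\tilde{\K}$; (ii) maximal accretivity in $L^2(\R^{2d})$; (iii) the two hypoelliptic estimates \eqref{1epsilon_con} and \eqref{0epsilon_con}; (iv) the spectral condition $\sigma(\tilde{\K})\cap i\R=\{0\}$ needed to invoke \eqref{K_estimation}.

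For step (i), I would use the decomposition \eqref{tKseries}--\eqref{X_i_Langevin} and check that each $\X_i$ belongs to $\Pol_1^N$ for some $N$ determined by Hypothesis \ref{Hypo:V(q)}: the creation/annihilation operators $\X_i,\X_i^*$ have coefficients linear in $\bm p$, while $\X_0$ contains $\nabla V(\bm q)$ which grows at most like $(1+\|\bm q\|^2)^{M/2}$ by condition 2 of Hypothesis \ref{Hypo:V(q)}. The non-degeneracy of the Lie algebra generated by $\{\X_0,\X_1,\ldots,\X_d\}$ follows by computing the first-order commutators
\begin{align*}
[\X_0,\X_i]=\tfrac{1}{\mu}\sqrt{\gamma/\beta}\,\partial_{q_i}+\text{l.o.t.},
\end{align*}
which produces the $\bm q$-derivatives missing from the $\X_i$'s; iterated commutators $[\X_0,[\X_0,\X_i]]$ then bring in the Hessian of $V$, and Hypothesis \ref{Hypo:V(q)} (condition 1) guarantees the polynomial control on these coefficients needed to verify Definition \ref{nond_cond}. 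This is exactly the computation carried out in \cite{eckmann2000non,nier2005hypoelliptic}, and I would quote it rather than redo it.

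For step (ii), accretivity of $\tilde{\K}$ on the Schwartz space follows from \eqref{tKseries}: indeed $\X_0$ is skew-symmetric while $\sum_i \X_i^*\X_i$ is nonnegative, so $\Re\langle \tilde{\K}u,u\rangle=\sum_i\|\X_i u\|^2\geq 0$. Maximal accretivity of the closure is the standard density-of-Schwartz-space argument used in \cite{eckmann2003spectral,herau2004isotropic}. For step (iii), the $(1,\epsilon)$ estimate \eqref{1epsilon_con} is the main hypoelliptic gain and is obtained by combining the subelliptic estimate for $\sum \X_i^*\X_i$ (controlling derivatives in $\bm p$), with iterated commutator estimates (controlling derivatives in $\bm q$ via $[\X_0,\X_i]$), weighted against the polynomial growth allowed by Hypothesis \ref{Hypo:V(q)}. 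The $(0,\epsilon)$ estimate \eqref{0epsilon_con} then follows by a standard interpolation/weight-gain argument using condition 2 of Hypothesis \ref{Hypo:V(q)}, which guarantees that $\nabla V$ itself controls polynomial weights in $\bm q$.

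Once (i)--(iii) are in place, Theorem \ref{EH_spec_est} directly delivers compactness of the resolvent of $\tilde{\K}$ and the cusp-shaped localization of its spectrum. For step (iv), I would identify $\ker\tilde{\K}$ explicitly: by the unitary conjugation \eqref{unitary_t}, $\tilde{\pi}_0$ corresponds to projection onto the Gibbs equilibrium $e^{-\beta H/2}/\sqrt{Z}$ in $L^2(\R^{2d})$, which is one-dimensional. The absence of any other purely imaginary eigenvalues is the true analytical core of the argument and the step I expect to be the main obstacle: it is a non-self-adjoint statement that must be extracted either from a hypocoercive commutator identity in the spirit of H\'erau--Nier and Villani, or from a quantitative ergodicity argument showing that any eigenfunction associated with $\lambda\in i\R$ must be $\tilde{\K}$-invariant and hence (by the hypoelliptic regularization already established) proportional to the equilibrium. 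With $\sigma(\tilde{\K})\cap i\R=\{0\}$ in hand, Theorem \ref{exp_est_K} applies verbatim and yields the estimate \eqref{K_estimation11}, completing the proof.
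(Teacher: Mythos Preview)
Your proposal is correct and is in fact more detailed than what the paper itself offers. In the paper, Proposition \ref{prop1} is not proved but simply \emph{attributed} to Helffer and Nier \cite{nier2005hypoelliptic}; the surrounding text merely remarks that one ``just needs to verify whether $\tK$ is a poly-H\"ormander operator'' and that this ``can be achieved by imposing additional conditions on the particle interaction potential $V(\bm q)$'' (namely Hypothesis \ref{Hypo:V(q)}), referring to \cite[Proposition 3.7]{eckmann2000non} for the verification. No further argument is given.

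Your four-step plan---poly-H\"ormander structure via the commutators $[\X_0,\X_i]$, maximal accretivity from $\Re\langle\tK u,u\rangle=\sum_i\|\X_i u\|^2$, the hypoelliptic estimates \eqref{1epsilon_con}--\eqref{0epsilon_con}, and the spectral condition $\sigma(\tK)\cap i\R=\{0\}$---is exactly the route implicit in the paper's presentation: it amounts to feeding the Langevin generator into the abstract Theorem \ref{exp_est_K}. You are right to flag step (iv) as the genuine analytical core; in the cited literature this is handled by showing that any eigenfunction with purely imaginary eigenvalue must lie in $\bigcap_i\ker\X_i$ (from the accretivity identity) and then using the transport structure of $\X_0$ together with the confining potential to force it into $\text{span}\{e^{-\beta H/2}\}$. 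Your sketch of this step is accurate, and the paper does not elaborate on it either.
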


\noindent
By using the isomorphism \eqref{unitary_t} we can  
rewrite Proposition \ref{prop1} in $L^2(\R^{2d};\rho_{eq})$ 
as
\begin{align}\label{iso_E}
\|e^{-t\K} u_0-\pi_0u_0\|_{L^2_{eq}}=
\|e^{-t\tilde{\K}}\tilde{u}_0-\tilde{\pi}_0\tilde{u}_0\|_{L^2}\leq Ce^{-\alpha t}\|\tilde{u}_0\|_{L^2}=Ce^{-\alpha t}\|u_0\|_{L^2_{eq}},
\end{align}
where $\pi_0=\U\tilde{\pi}_0\U^{-1}$ is the orthogonal 
projection  $\pi_0(\cdot)=\mathbb{E}[(\cdot)]$.
The inequality \eqref{iso_E} is completely equivalent 
to the estimate \eqref{K_estimation}.
It is also possible to obtain a prior estimate on the 
convergence rate $\alpha$ by building a
connection between the Kolmogorov operator and the Witten Laplacian 
(see \cite{herau2004isotropic,nier2005hypoelliptic} for further 
details).

Our next task is to derive an estimate for the operator 
$\tQ\tK\tQ$, and for the semigroup $e^{-t\tQ\tK\tQ}$ 
generated by the closure of  $\tQ\tK\tQ$.
According to Theorem \ref{QKQ_cusp} the 
spectrum of $\tQ\tK\tQ$ is bounded by the cusp 
$\S_{\tQ\tK\tQ}$, provided that $\P$ is an orthogonal 
finite-rank projection operator. 
On the other hand, Theorem \ref{exp_est_QKQ}
establishes exponential convergence of 
$e^{-t\tQ\tK\tQ}$ to equilibrium if $\tQ\tK\tQ$ 
satisfies condition \eqref{QKQ_spec_0}. 
It is left to
determine the exact form of the spectral 
projection $\tilde{\pi}_0^{\tQ}$, i.e., the projection 
onto the kernel of $\tQ\tK\tQ$ (see 
Theorem \ref{exp_est_QKQ}) and verify condition \eqref{QKQ_spec_0}. To this end, we consider 
a general Mori-type projection $\P$ and its unitarily 
equivalent version $\tP=\U^{-1}\P\U$
\begin{align}\label{projection_fq}
\P(\cdot)=\sum_{i=1}^m\langle \cdot,v_i\rangle_{\rho_{eq}} 
v_i,\qquad 
\tilde{\P}(\cdot)=\sum_{i=1}^m\langle \cdot ,v_i\rangle_{\rho_{eq}/2} v_ie^{-\beta H/2},
\end{align}
where $\{v_j\}_{j=1}^m=\{v_j(\bm q,\bm p)\}_{j=1}^m$ are zero-mean, i.e. $\langle v_i\rangle_{\rho_{eq}} =0$, orthonormal basis functions.
In \eqref{projection_fq} we used the shorthand 
notation 
\begin{equation}
\langle h\rangle_{\rho_{eq}/2}=\frac{1}{Z}\int g(\bm p,\bm q)e^{-\beta H(\bm p,\bm q)/2}d\bm pd\bm q.
\end{equation}
\begin{lemma}\label{lemma1}
{Suppose that the particle interaction potential $V(\bm q)$ in 
\eqref{L:LE} satisfies Hypothesis \ref{Hypo:V(q)}. Then for any set of 
observables $\{w_j\}_{j=1}^m$ satisfying $\langle w_j, v_i\rangle_{\rho_{eq}} =0$ 
and $\K w_j=v_j$ we have that the kernel of $\tQ\tK\tQ$ 
is given by 
\begin{align}
\textrm{Ker}(\tQ\tK\tQ)&=\text{Ker}(\tK)\,\cup\, \text{Ran}(\tP)\,\cup\, \text{Span}\{w_je^{-\beta H/2}\}_{j=1}^m,\label{1}
\end{align}}
where $\tK$ and $\tP$ are defined in \eqref{tK} 
and \eqref{projection_fq}, respectively. In particular, if
$\P$ is defined as $\P(\cdot)=\langle \cdot,p_j\rangle_{\rho_{eq}} p_j$, 
where $p_j$ is the momentum of $j$-th particle, then 
we have 
\begin{align}
\sigma(\tQ\tK\tQ)\cap i\R&\subseteq\{0\}.\label{2}
\end{align}
\end{lemma}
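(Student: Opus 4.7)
My plan splits the lemma into two parts: the kernel identification \eqref{1} and the spectral claim \eqref{2}. For \eqref{1} I would prove the two inclusions separately. The $\supseteq$ inclusion reduces to three direct verifications. If $g\in\text{Ran}(\tP)$, then $\tQ g=0$ and the conclusion is immediate. For $g=w_je^{-\beta H/2}$, using $\U(w_je^{-\beta H/2})=\sqrt{Z}\,w_j$ and the hypothesis $\K w_j=v_j$, one checks $\tK(w_je^{-\beta H/2})=v_je^{-\beta H/2}$; the Gaussian-weighted identity $\langle v_je^{-\beta H/2},v_ie^{-\beta H/2}\rangle_{L^2}=\langle v_iv_j\rangle_{\rho_{eq}}=\delta_{ij}$ shows that $v_je^{-\beta H/2}\in\text{Ran}(\tP)$, hence $\tQ\tK(w_je^{-\beta H/2})=0$. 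The orthogonality hypothesis $\langle w_j,v_i\rangle_{\rho_{eq}}=0$ also gives $\tP(w_je^{-\beta H/2})=0$, so $\tQ$ acts as the identity on this vector and $\tQ\tK\tQ(w_je^{-\beta H/2})=0$. Finally, by Proposition \ref{prop1}, $\text{Ker}(\tK)=\text{Span}\{e^{-\beta H/2}\}$, and the zero-mean condition $\langle v_i\rangle_{\rho_{eq}}=0$ yields $\tP e^{-\beta H/2}=0$, so this element is annihilated as well.

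For the reverse inclusion $\subseteq$, given $g\in\text{Ker}(\tQ\tK\tQ)$ I would decompose $g=\tP g+h$ with $h=\tQ g$. The first term lies in $\text{Ran}(\tP)$. The condition $\tQ\tK h=0$ forces $\tK h\in\text{Ker}(\tQ)=\text{Ran}(\tP)$, so $\tK h=\sum_i c_i\, v_ie^{-\beta H/2}$. Using the preimage identity $\tK(w_ie^{-\beta H/2})=v_ie^{-\beta H/2}$ established above, the function $h-\sum_i c_i w_ie^{-\beta H/2}$ lies in $\text{Ker}(\tK)$, yielding the desired decomposition.

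For \eqref{2} I would first exhibit an explicit preimage tailored to the momentum projection: from \eqref{L:LE} one computes $\K q_j=-p_j/\mu$, so $w_j:=-\mu q_j$ satisfies $\K w_j=p_j=v_j$, and separability of the Gibbs measure together with $\int p_i\,e^{-\beta\|\bm p\|^2/(2\mu)}d\bm p=0$ gives $\langle w_j,p_i\rangle_{\rho_{eq}}=0$ and $\langle p_i\rangle_{\rho_{eq}}=0$, so part \eqref{1} applies. To rule out nonzero imaginary eigenvalues, assume $\tQ\tK\tQ g=i\lambda g$ with $\lambda\in\R\setminus\{0\}$. Since $\tQ\tK\tQ g\in\text{Ran}(\tQ)$ and $\lambda\neq 0$, we have $g\in\text{Ran}(\tQ)$, so $\tQ g=g$. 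Using the H\"ormander decomposition \eqref{tKseries} with $\X_0$ skew-symmetric, accretivity yields
\begin{align*}
0=\Re\langle i\lambda g,g\rangle=\Re\langle\tK g,g\rangle=\sum_{i=1}^d\|\X_i g\|^2,
\end{align*}
so $\X_i g=0$ for all $i$. From the explicit form of $\X_i$ in \eqref{X_i_Langevin}, this integrates to the ansatz $g(\bm q,\bm p)=\phi(\bm q)\,e^{-\beta\|\bm p\|^2/(4\mu)}$ for some $\phi\in L^2(\R^d)$.

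The crux is to force $\phi=0$ from this ansatz, and this is where I expect the main difficulty. With $\X_i g=0$, the decomposition \eqref{tKseries} collapses to $\tK g=-\X_0 g=-e^{-\beta\|\bm p\|^2/(4\mu)}\,\tfrac{\bm p}{\mu}\cdot\bigl[\nabla\phi+\tfrac{\beta}{2}\phi\nabla V\bigr]$, which is \emph{odd} in $\bm p$. The projected part $\tP\tK g=\sum_j c_j\,p_j e^{-\beta H/2}$ is also odd in $\bm p$, whereas the right-hand side $i\lambda g=i\lambda\phi(\bm q)\,e^{-\beta\|\bm p\|^2/(4\mu)}$ is \emph{even} in $\bm p$. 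Matching parities in $\tK g-\tP\tK g=i\lambda g$ forces $i\lambda\phi\equiv 0$, hence $\phi\equiv 0$ and $g=0$, contradicting $g\neq 0$. This parity argument, which crucially exploits the specific Langevin form of $\tK$ and the choice $v_j=p_j$, is the delicate step; everything else is direct computation and the structural observation that $\text{Ker}(\tK)\subset\text{Ran}(\tQ)$.
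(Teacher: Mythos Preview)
Your proof is correct and follows essentially the same strategy as the paper: for \eqref{1} both arguments verify the two inclusions directly using the preimage identity $\tK(w_je^{-\beta H/2})=v_je^{-\beta H/2}$, and for \eqref{2} both use accretivity to force $\X_i g=0$, obtain the Gaussian ansatz $g=\phi(\bm q)e^{-\beta\|\bm p\|^2/(4\mu)}$, and then exploit the structure of $\X_0$ to conclude $\lambda=0$. Your execution is somewhat cleaner---observing at the outset that $\lambda\neq 0$ forces $g\in\text{Ran}(\tQ)$ lets you bypass the paper's three-case analysis of $\text{Ker}(\tQ\tS\tQ)$, and your $\bm p\mapsto -\bm p$ parity argument is a tidy repackaging of the paper's explicit coefficient matching $\sum_i p_i f_i(\bm q)=i\lambda\Phi(\bm q)$---but the underlying ideas coincide.
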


\begin{proof}
We first prove \eqref{1}. {To this end, let us first define the finite-dimensional space
\begin{equation}
W=\text{Ker}(\tK)\,{\cup}\, \text{Ran}(\tP)\,{\cup}\, \text{Span}\{w_je^{-\beta H/2}\}_{j=1}^m.
\end{equation}
}
If $u\in \text{Ker}(\tQ\tK\tQ)$ then $\tQ\tK\tQ u=0$. This implies $ \tK\tQ u-\tP\tK\tQ u=0$. Equivalently, 
\begin{equation}
\begin{aligned}
\tK u&=\tK\tP u+\tP\tK\tQ u\\
&=\sum_{j=1}^m\langle u,v_i\rangle_{eq/2} \tK v_ie^{-\beta H/2}+\sum_{j=1}^m\langle\tK\tQ u,v_i\rangle_{eq/2} v_ie^{-\beta H/2}\\
& \in \text{Span}\{v_je^{-\beta H/2}\}_{j=1}^m\,{\cup}\, \text{Span}\{\tK v_je^{-\beta H/2}\}_{j=1}^m.
\end{aligned}
\end{equation}
Since $\K w_j=\U\tK\U^{-1}w_j=v_j$, we have $\tK w_je^{-\beta H/2}=v_je^{-\beta H/2}$. 
This implies that $u\in {W}$ and $\text{Ker}(\tQ\tK\tQ)\subseteq {W}$.  
Let $f$ be an arbitrary element in $W$.  Then, 
\begin{align*}
f=\alpha e^{-\beta H/2}+\sum_{j=1}^m\rho_jv_j
e^{-\beta H/2}+\sum_{j=1}^m\theta_jw_je^{-\beta H/2}.
\end{align*}
$\{\alpha, \rho_1,\ldots,\rho_m, 
\theta_1,\ldots,\theta_m\}$ 
are the coordinates of $f$ in the finite-dimensional 
space $W$. 
By using the definition of $\P$, the fact 
that $\langle v_i\rangle_{\rho_{eq}} =\langle v_i,w_j\rangle_{\rho_{eq}} =0$ 
and $\langle v_j^2 \rangle_{\rho_{eq}} =1$ we obtain 
\begin{align*}
\tP f=\sum_{j=1}^m\rho_jv_je^{-\beta H/2}
&\Rightarrow 
\tQ f=\alpha e^{-\beta H/2}+\sum_{j=1}^m\theta_jw_j
e^{-\beta H/2}.
\end{align*}
Therefore,
\begin{align*}
\tK \tQ f=\tP\tK \tQ f=\sum_{j=1}^m\theta_jv_je^{-\beta H/2}
&\Rightarrow \tQ\tK \tQ f=0.
\end{align*}
This proves that $W\subseteq \text{Ker}(\tQ\tK\tQ)$, 
{and therefore \eqref{1} holds. In fact,} 
the kernel of $\tQ\tK\tQ$ 
can be constructed by taking the union of 
three sets defined by the conditions:
\begin{enumerate} 
\item $\tQ u=0$, which implies $\tP u=0$, i.e., $u\in \text{Ran}(\tP)$;
\item $\tQ u\neq 0,\tK\tQ u=0$, which implies 
$\tQ u\in \text{Ker}(\tK)$. 
{This is possible only if $u\in \text{Ran}(\tQ)\cap\text{Ker}(\tK)$ 
since in this case we have $\tQ u=u$;}
\item $\tQ u\neq 0,\tK\tQ u\neq 0, \tQ\tK\tQ u=0$, 
which implies $\tK\tQ u=\tP\tK\tQ u\neq 0$. {This is 
possible only if $\tQ u=u$, $\tK u\neq 0$ and 
$u\in \text{Span}\{ w_ie^{-\beta H/2}\}_{i=1}^m$, provided 
that the set of observables $\{w_j\}_{j=1}^m$ 
satisfies $\langle w_j, v_i\rangle_{\rho_{eq}} =0$ and $\K w_j=v_j$.}
\end{enumerate}
{
Combining these three cases and using 
the fact that $L^2(\R^{2d})=\text{Ran}(\tP)\oplus\text{Ran}(\tQ)$ we have 
\begin{align*}
\textrm{Ker}(\tQ\tK\tQ)
&=\text{Ran}(\tP)\oplus\left(
\text{Ker}(\tK)\cap\text{Ran}(\tQ)\right)\oplus \left(\text{Ker}^{\perp}(\tK)\cap \text{Ran}(\tQ) \cap\text{Span}\{w_je^{-\beta H/2}\}_{j=1}^m\right),\\
&=\text{Ran}(\tP)\cup
\text{Ker}(\tK)\cup\text{Span}\{w_je^{-\beta H/2}\}_{j=1}^m.
\end{align*}
}
Next, we prove condition \eqref{2}  for 
$\P=\langle\cdot,p_i\rangle_{\rho_{eq}} p_j$.
Such condition states that the only eigenvalue 
of $\tQ\tK\tQ$ on the imaginary 
axis $i\R$ is the origin. Equivalently, this 
means that for all $u\in L^2(\R^{2d})$ such 
that $\tQ\tK\tQ u=i\lambda u$ ($\lambda\in\R$)
we have that $\lambda=0$. To see this, we first notice 
that $\Re(\tQ\tK\tQ)u=0$. Since $\tQ$ 
is a symmetric operator, we have that 
$\Re(\tQ\tK\tQ)u=[\tQ(\tK+\tK^*)\tQ]u/2=\tQ\tS\tQ u=0$, 
where $\tS=\sum_{j=1}^d\X_i^*\X_i$.  
This means that $u\in \text{Ker}(\tQ\tS\tQ)$.
As before, $\text{Ker}(\tQ\tS\tQ)$ can be constructed by 
taking the union of three different sets defined by the conditions: 
\begin{enumerate} 
\item $\tQ u=0$, which implies $u=\rho p_j$; 
\item $\tQ u\neq 0,\tS\tQ u=0$, which imply 
$u\in \text{Ker}(\S)$, i.e.,   $u=\alpha \Phi(\bm q)
e^{-\frac{\beta}{4\mu}\|\bm p\|^2}$, where 
$\Phi(\bm q)$ is an arbitrary function of the 
coordinates $\bm q$; 
\item $\tQ u\neq 0,\tS\tQ u\neq 0, \tQ\tS\tQ u=0$, 
which imply $\tP\tS\tQ u=\tS\tQ u$.
\end{enumerate}
The first condition implies that $\tQ\tK\tQ u=0=i\lambda u$, i.e., 
$\lambda=0$. Upon definition of $g=\tQ u$, the third condition 
implies that $\langle \tS g,p_j\rangle_{eq/2} p_j
e^{-\beta H/2}=\tS g$. 
This is  a linear ODE for $g$ that has the unique 
solution $g=\theta p_je^{-\beta H/2}$ for some 
constant $\theta\neq 0$. However, it is easy to show that 
there is no $u$ such that $\tQ u=g=\theta p_j
e^{-\beta H/2}$. In fact, if such $u$ exists 
then  $\tP\tQ u=\tP g=\theta p_je^{-\beta H/2}\neq 0$ 
which contradicts the operator identity $\tP\tQ=0$.  
Lastly, the second conditions implies that if 
$u=\Phi(\bm q)e^{-\frac{\beta}{4\mu}\|\bm p\|^2}$ then 
$\tP u=0$ and $\tQ u=u$. 
Now consider $\Im(\tQ\tK\tQ)u=\tQ\X_0\tQ u=i\lambda u$. 
By using the conditions above we obtain 
\begin{align}
\tQ\X_0\tQ u&=\X_0\tQ u-\tP\X_0\tQ u\nonumber\\
&=\X_0u-\langle \X_0\tQ u,p_j\rangle_{eq/2}p_je^{-\beta H/2}\nonumber\\
&=\sum_{i=1}^d-\frac{\beta}{2\mu}p_i\partial_{q_i}V(\bm q)\Phi(\bm q)e^{-\frac{\beta}{4\mu}\|\bm p\|^2}
-\frac{p_i}{\mu}\partial_{q_i}\Phi(\bm q)e^{-\frac{\beta}{4\mu}\|\bm p\|^2}-\langle\X_0\tQ u,p_j\rangle_{eq/2}p_je^{-\beta H/2}
\nonumber\\
&=\sum_{i=1}^dp_i(f_i(\bm q))e^{-\frac{\beta}{4\mu}\|\bm p\|^2}\nonumber\\
&=i\lambda \Phi(\bm q)e^{-\frac{\beta}{4\mu}\|\bm p\|^2}.
\end{align}
The last equality holds if and only if $f_i(\bm q)=0$ 
and $\lambda=0$. 
This proves that $\tQ\tK\tQ$ has no purely 
imaginary eigenvalues. 
 
\end{proof}

\paragraph{Remark}
Proving the existence and 
uniqueness of a set of observables $\{w_1,\ldots,w_m\}$ 
such that $\langle w_j, v_i\rangle_{\rho_{eq}} =0$ 
and $\K w_j=v_j$ is not straightforward as 
it involves the analysis of a system of $m$ 
hypo-elliptic equations $\K w_j=v_j$. Fortunately, 
this can avoided in some cases, e.g., when the observable 
$v_j$ coincides with time derivative of $w_j$. A 
typical example is the momentum $p_j$ of the 
$j$-th particle. We also emphasize that 
in Lemma \ref{lemma1} we proved that 
$\Q\K\Q$ has no purely imaginary eigenvalues 
if the projection operator $\P$ is chosen 
as $\P=\langle \cdot,p_j \rangle_{\rho_{eq}}p_j$. 
This result may not be true for other projections, i.e., 
$\Q\K\Q$ can, in general, have purely imaginary eigenvalues. \\

\noindent
Lemma \ref{lemma1} allows us to prove 
the following exponential convergence result for the 
semigroup $e^{-t\Q\K\Q}$.

\begin{prop}\label{prop2}
Suppose that the particle interaction potential $V(\bm q)$ in 
\eqref{L:LE} satisfies Hypothesis \ref{Hypo:V(q)}. Let $\P$ be
the projection operator \eqref{projection_fq}. {For any set of observables 
$\{w_1,\ldots,w_m\}$ satisfying $\langle w_j, v_i\rangle_{\rho_{eq}} =0$, 
$\K w_j=\K^{*}w_j=v_j$ and $\sigma(\Q\K\Q)\cap i\R
\subseteq\{0\}$} there exist two positive 
constants $C$ and $\alpha_{\Q}$ such that 
\begin{align}\label{QKQ_estimation1}
\|e^{-t\Q\K\Q}u_0-\pi^{\Q}_0u_0\|_{L^2_{eq}}\leq Ce^{-\alpha_{\Q} t}\| u_0\|_{L^2_{eq}}
\end{align}
for all $u_0\in L^2(\R^{2d};\rho_{eq})$ and $t>0$. In \eqref{QKQ_estimation1},
{$\pi^{\Q}_0$ is the orthogonal projection onto the linear space 
$\text{Ker}(\Q\K\Q)=\text{Ker}(\K)\,{\cup}\, \text{Ran}(\P)\, {\cup}\,  \text{Span}\{w_j\}_{j=1}^{m}$.}
\end{prop}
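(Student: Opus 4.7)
The plan is to reduce Proposition \ref{prop2} to a direct application of Theorem \ref{exp_est_QKQ}, after passing to the flat Hilbert space $L^2(\R^{2d})$ via the unitary isomorphism $\U$ defined in \eqref{unitary_t}. Setting $\tilde{u}_0=\U^{-1}u_0$, $\tK=\U^{-1}\K\U$, $\tP=\U^{-1}\P\U$, and $\tQ=\I-\tP$, the isometry \eqref{iso} turns the desired bound \eqref{QKQ_estimation1} into the equivalent $L^2(\R^{2d})$-estimate for $e^{-t\tQ\tK\tQ}\tilde{u}_0-\tilde{\pi}_0^{\tQ}\tilde{u}_0$, exactly as in the derivation of \eqref{iso_E}. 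Identifying $\pi_0^{\Q}=\U\tilde{\pi}_0^{\tQ}\U^{-1}$ at the end completes the transfer.

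The three hypotheses of Theorem \ref{exp_est_QKQ} applied to $\tK$ and $\tP$ would then be verified as follows. First, Hypothesis \ref{Hypo:V(q)} together with Proposition \ref{prop1} shows that $\tK$ satisfies all assumptions of Theorem \ref{EH_spec_est}. Second, since $\P$ in \eqref{projection_fq} is a rank-$m$ symmetric projection on $L^2(\R^{2d};\rho_{eq})$, its unitary conjugate $\tP$ is a rank-$m$ symmetric projection on $L^2(\R^{2d})$. Third, since unitary conjugation preserves the spectrum, the assumed condition $\sigma(\Q\K\Q)\cap i\R\subseteq\{0\}$ transfers verbatim to $\tQ\tK\tQ$. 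Theorem \ref{exp_est_QKQ} then yields the desired exponential bound in $L^2(\R^{2d})$, which is pushed back to $L^2(\R^{2d};\rho_{eq})$ through the isometry to produce \eqref{QKQ_estimation1}.

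What remains is to identify $\tilde{\pi}_0^{\tQ}$ concretely as the orthogonal projection onto the linear space described in the statement. Lemma \ref{lemma1} already supplies the kernel characterization
$\text{Ker}(\tQ\tK\tQ)=\text{Ker}(\tK)\,\cup\,\text{Ran}(\tP)\,\cup\,\text{Span}\{w_je^{-\beta H/2}\}_{j=1}^m$, and applying $\U$ sends this set exactly to $\text{Ker}(\K)\,\cup\,\text{Ran}(\P)\,\cup\,\text{Span}\{w_j\}_{j=1}^m$. The main obstacle I foresee is the reconciliation between the \emph{spectral} (Riesz) projection $\tilde{\pi}_0^{\tQ}$ delivered by Theorem \ref{exp_est_QKQ} and the \emph{orthogonal} projection onto the geometric kernel claimed in Proposition \ref{prop2}. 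These coincide precisely when $0$ is a semi-simple eigenvalue with $\text{Ker}(\tQ\tK\tQ)=\text{Ker}((\tQ\tK\tQ)^*)=\text{Ker}(\tQ\tK^*\tQ)$. This is where the extra hypothesis $\K^*w_j=v_j$ (beyond the $\K w_j=v_j$ used in Lemma \ref{lemma1}) becomes essential: since $\tK-\tK^*=-2\X_0$ on the Schwartz core, it forces $\X_0 w_j=0$ and enables a verbatim repetition of the argument of Lemma \ref{lemma1} with $\tK$ replaced by $\tK^*$, yielding the same description for $\text{Ker}(\tQ\tK^*\tQ)$ and therefore the desired equality of the two kernels. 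Combined with the isolation of $0$ in $\sigma(\tQ\tK\tQ)$ guaranteed by the cusp estimate of Theorem \ref{QKQ_cusp}, this upgrades $\tilde{\pi}_0^{\tQ}$ to the sought orthogonal projection and concludes the proof.
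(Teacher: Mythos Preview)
Your proposal is correct and follows essentially the same route as the paper: pass to the flat space via $\U$, invoke the abstract cusp/compact-resolvent machinery for $\tQ\tK\tQ$, use Lemma~\ref{lemma1} for the kernel, and exploit the extra hypothesis $\K^{*}w_j=v_j$ to get $\text{Ker}(\tQ\tK\tQ)=\text{Ker}(\tQ\tK^{*}\tQ)$ and hence orthogonality of $\tilde{\pi}_0^{\tQ}$. The only cosmetic difference is that the paper, rather than identifying the Riesz projection with the orthogonal one, directly shows $\text{Ker}(\tQ\tK\tQ)^{\perp}$ is $\tQ\tK\tQ$-invariant (which is exactly what forces semi-simplicity at $0$) and reruns the Dunford integral estimate with the orthogonal projection in place.
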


\begin{proof} Rewrite \eqref{QKQ_estimation1} 
as an $L^2(\R^{2d})$ estimation problem  
\begin{align}\label{equi_estimation}
\|e^{-t\tQ\tK\tQ}\tilde{u}_0-\tilde{\pi}^{\tQ}_0
\tilde{u}_0\|_{L^2}
\leq Ce^{-\alpha_{\Q} t}\|\tilde{ u}_0\|_{L^2},
\end{align}
where $\tilde{\pi}^{\tQ}_0=\U^{-1}\pi_0^{\Q}\U$. 
The transformed Kolmogorov operator $\tK$ is 
of the form \eqref{general_K} with 
compact resolvent and a spectrum enclosed 
in cusp-shaped region of the complex plane 
shown in Figure \ref{fig:cusp} (see Proposition \ref{prop1}). 
Then, by Theorem \ref{QKQ_cusp}, the 
operator $\tQ\tK\tQ$ has exactly the same 
properties, provided $\tP$ is a symmetric, finite-rank projection.
To derive the estimate \eqref{QKQ_estimation1} we
simply use the conclusions of 
Theorem \ref{exp_est_QKQ}. To this end, we need 
to make sure that the following two conditions are satisfied
\begin{enumerate}
\item[] {\em Condition 1.}  $\text{Ran}(\pi_0^{\Q})=\text{Ker}(\tQ\tK\tQ)=\text{Ker}(\tK)\, {\cup}\, \text{Ran}(\tP)\, {\cup}\, \text{Span}\{w_je^{-\beta H/2}\}_{j=1}^{m}$. Moreover, the $L^2$-orthogonal space $\text{Ker}(\tQ\tK\tQ)^{\perp}$ is an invariant subspace of operator $\tQ\tK\tQ$.
 
\item [] {\em Condition 2.} $\tilde{\pi}^{\tQ}_0$ is an orthogonal 
projection in $L^2(\R^{2d})$.
\end{enumerate}

\paragraph*{Proof of Condition 1} 
In Lemma \ref{lemma1}, we have shown that 
$\text{Ker}(\tQ\tK\tQ)=\text{Ker}(\tK)\, {\cup}\, \text{Ran}(\tP)\, {\cup}\, 
\text{Span}\{w_je^{-\beta H/2}\}_{j=1}^{m}$. 
Hence, we just need to prove that $\text{Ker}(\tQ\tK\tQ)^{\perp}$
is an invariant subspace of operator $\tQ\tK\tQ$. 
To this end, we recall that the projection 
operator $\tP$ is a symmetric operator, 
therefore $\text{Ran}(\tP)=\text{Ran}(\tP^*)$. In 
\cite{nier2005hypoelliptic}, Helffer and Nier 
proved that $\text{Ker}(\tK)=\text{Ker}(\tK^*)=e^{-\beta H/2}$. 
By following the same mathematical steps that lead us 
to equation \eqref{1} we obtain
\begin{align}\label{Ker_QKQ}
\text{Ker}(\tQ\tK\tQ)=\text{Ker}(\tQ\tK^*\tQ)=\text{Ker}(\tK)\, {\cup}\, \text{Ran}(\tP)\, {\cup}\, \text{Span}\{w_je^{-\beta H/2}\}_{j=1}^{m}.
\end{align} 
We now verify that $\tQ\tK\tQ$ maps the 
linear subspace $\text{Ker}(\tQ\tK\tQ)^{\perp}$ into itself, i.e., 
that for any $u\in \text{Ker}(\tQ\tK\tQ)^{\perp}$ we have that 
$\tQ\tK\tQ u\in \text{Ker}(\tQ\tK\Q)^{\perp}$.  
To this end, we notice that if 
$w\in \text{Ker}(\tQ\tK\tQ)$ then  
\begin{align*}
\langle \tQ\tK\tQ u,w\rangle=\langle  u,\tQ\tK^*\tQ w\rangle=0.
\end{align*}
This follows directly from $\text{Ker}(\tQ\tK\tQ)=\text{Ker}(\tQ\tK^*\tQ)$.
On the other hand, if 
$u\in \text{Ker}(\tQ\tK\tQ)^{\perp}$ then $\tQ\tK\tQ u\neq 0$ 
and therefore we must have 
$\tQ\tK\tQ u\in \text{Ker}(\tQ\tK\tQ)^{\perp}$. 
Next, consider the following orthogonal decomposition of the Hilbert space $L^2(\R^{2d})$
\begin{align*}
L^2(\R^{2d})=\text{Ker}(\tQ\tK\tQ)\overset{\perp}{\oplus} \text{Ker}(\tQ\tK\tQ)^{\perp}.
\end{align*}
If we define a projection operator $\pi_0^{\tQ}$ 
with range $\text{Ran}(\pi_0^{\tQ})=\text{Ker}(\tQ\tK\tQ)$, 
then for any $\tilde {u}_0\in L^2(\R^{2d})$, we have 
the orthogonal decomposition 
\begin{align*}
\tilde{u}_0=\pi_0^{\tQ}\tilde{u}_0+(\tilde{u}_0-\pi_0^{\tQ}\tilde{u}_0), \quad \text{where}\quad \pi_0^{\tQ}\tilde{u}_0\in \text{Ker}(\tQ\tK\tQ), \quad \tilde{u}_0-\pi_0^{\Q}
\tilde{u}_0\in \text{Ker}(\tQ\tK\tQ)^{\perp}.
\end{align*}
Since $\text{Ker}(\tQ\tK\tQ)^{\perp}$ is an invariant 
subspace of $\tQ\tK\tQ$, and therefore of 
$e^{-t\tQ\tK\tQ}$, we have that
$e^{-t\tQ\tK\tQ}(\tilde{u}_0-\pi_0^{\Q}
\tilde{u}_0)\in \text{Ker}(\tQ\tK\tQ)^{\perp}$ for all $t>0$. 
On the other hand, since $\U$ is an unitary transformation 
we have $\sigma(\tQ\tK\tQ)\cap i\R=\{0\}$. These facts 
allow us to deform the domain of the Dunford integral 
representing $e^{t\tQ\tK\tQ}\tilde{u}_0-\tilde{\pi}_0^{\tQ}
\tilde{u}_0$  from $[-i\infty,+i\infty]$ to the cusp 
$\S'_{\tQ\tK\tQ}$, as we did in 
Theorem \ref{exp_est_K}. This yields
\begin{align*}
e^{t\tQ\tK\tQ}\tilde{u}_0-\tilde{\pi}_0^{\tQ}\tilde{u}_0=
e^{t\tQ\tK\tQ}\left(\tilde{u}_0-\tilde{\pi}_0^{\tQ}\tilde{u}_0\right)=\frac{1}{2\pi i}\int_{\partial \S'_{\tQ\tK\tQ}}e^{-tz}
\left(z-\tQ\tK\tQ\right)^{-1}\tilde{u}_0dz.
\end{align*}
At this point we can follow the exact same procedure in 
the proofs of Theorem \ref{exp_est_K} and Theorem 
\ref{QKQ_cusp} to show that the semigroup estimate \eqref{QKQ_estimation1} holds true.

\paragraph*{Proof of Condition 2} We first recall that 
$\text{Ran}(\tilde{\pi}_0^{\tQ})=\text{Ker}(\tQ\tK\tQ)$. 
This implies that for all $u\in L^2$ and all
$w\in \text{Ker}(\tQ\tK\tQ)^{\perp}$ we have 
\begin{align*}
\langle\tilde{\pi}_0^{\tQ}u,w\rangle=\langle u,[\tilde{\pi}_0^{\tQ}]^*w\rangle=0.
\end{align*}
Hence, $[\tilde{\pi}_0^{\tQ}]^*w=0$ for 
all $w\in \text{Ker}([\tilde{\pi}_0^{\tQ}]^*)$, which implies 
that $\text{Ker}(\tQ\tK\tQ)^{\perp}\subset 
\text{Ker}([\tilde{\pi}_0^{\tQ}]^*)$. On the other hand, 
for all $u\in L^2$ and all $w\in \text{Ker}([\tilde{\pi}_0^{\tQ}]^*)$, 
we have
\begin{align*}
\langle u,[\tilde{\pi}_0^{\tQ}]^*w\rangle=\langle\tilde{\pi}_0^{\tQ}u,w\rangle=0.
\end{align*}
From this equations it follows that 
$\text{Ker}([\tilde{\pi}_0^{\tQ}]^*)=\text{Ker}(\tQ\tK\tQ)^{\perp}$. 
Next we decompose $L^2(\R^{2d})$ as 
\begin{align*}
L^2(\R^{2d})=\text{Ker}\left(\tilde{\pi}_0^{\tQ}\right)\oplus 
\text{Ran}\left(\tilde{\pi}_0^{\tQ}\right),\qquad
L^2(\R^{2d})=\text{Ker}\left(\left[\tilde{\pi}_0^{\tQ}\right]^*\right)\oplus 
\text{Ran}\left(\left[\tilde{\pi}_0^{\tQ}\right]^*\right)
\end{align*}
It follows from the above result that $\text{Ran}(\tilde{\pi}_0^{\tQ})=\text{Ran}([\tilde{\pi}_0^{\tQ}]^*)=\text{Ker}(\tQ\tK\tQ)$ 
and $\text{Ker}(\tilde{\pi}_0^{\tQ})=\text{Ker}([\tilde{\pi}_0^{\tQ}]^*)=\text{Ker}(\tQ\tK\tQ)^{\perp}$. 
For all $u,w\in L^2(\R^{2d})$ we have that  
$w-\tilde{\pi}_0^{\tQ}w\in \text{Ker}(\tilde{\pi}_0^{\tQ})$, which 
can be written as 
\begin{align*}
\langle \tilde{\pi}_0^{\tQ}u,w-\tilde{\pi}_0^{\tQ}w\rangle=
\langle u,[\tilde{\pi}_0^{\tQ}]^*(w-\tilde{\pi}_0^{\tQ}w)\rangle
=0.
\end{align*}
Therefore the operator $\tilde{\pi}_0^{\tQ}$ is an orthogonal projection. 
This completes the proof. {In addition, since $\tilde{\pi}_0^{\tQ}$ has 
range $\text{Ker}(\tK)\, {\cup}\, \text{Ran}(\tP)\, {\cup}\, \text{Span}\{w_je^{-\beta H/2}\}_{j=1}^{m}$ it can be shown that for the special case $\langle v_i e^{-\beta H}\rangle _{\rho_{eq}}=\langle w_i e^{-\beta H}\rangle _{\rho_{eq}}=0$ and $\langle v_i, w_j\rangle_{\rho_{eq}}=0$ we have that  
$\tilde{\pi}_0^{\tQ}$ admits the explicit representation
\begin{align}\label{PiQ*}
\tilde{\pi}^{\tQ}_0=\tilde{\pi}_0+\tP+\sum_{i=1}^m\langle \cdot,w_i\rangle_{\rho_{eq}}w_ie^{-\beta H/2}.
\end{align}
The projection $\tilde{\pi}^{\tQ}_0$ can be transformed back to ${\pi}^{\Q}_0$ by using the mapping $\mathcal{U}$ defined in \eqref{unitary_t}.}

\end{proof}

{\paragraph{Remark} In general, the orthogonal projection onto 
$\text{Ker}(\tK)\, {\cup}\, \text{Ran}(\tP)\,{\cup}\, \text{Span}\{w_je^{-\beta H/2}\}_{j=1}^{m}$ can be written as
\begin{align}\label{simple_form}
    \tilde{\pi}_0^{\tQ}=\sum_{i=1}^{2m+1}\langle\cdot, \tilde{e}_i\rangle \tilde{e}_i,
\end{align}
where $\{\tilde{e}_i\}_{i=1}^{2m+1}$ is an orthonormal basis of $\text{Ker}(\tQ\tK\tQ)$ in $L^2(\R^{2d})$. 
}

\paragraph{Remark}
In Proposition \ref{prop2}, we assumed 
that $\K w_j=\K^* w_j=v_j$. If this condition is 
not satisfied then the operator $\pi_0^{\Q}$ (or $\tilde{\pi}_0^{\Q}$) is no longer 
an orthogonal projection, and 
equation \eqref{PiQ*} does not hold. It is rather difficult  
to obtain an explicit expression for $\pi_0^{\Q}$ in this case. 
We also remark that estimating the convergence 
constant $\alpha_\Q$ in \eqref{QKQ_estimation1} 
is a non-trivial task since such constant coincides 
with the real part of the smallest non-zero 
eigenvalue of $\Q\K\Q$.

\subsection{EMZ memory and fluctuation terms}
Proposition \ref{prop2} allows us to prove that 
the EMZ memory kernel \eqref{SFD} and the fluctuation 
term \eqref{f} of the particle system converge 
exponentially fast to an equilibrium state for any 
observable \eqref{observable}. 

\begin{corollary} \label{cor:K_conver1}
Under the same hypotheses of Proposition \ref{prop2} and Corollary \ref{cor:K_conver} the one-dimensional memory kernel 
$K(t)= \langle u(0),\K e^{t\Q\K\Q}\Q\K u(0)\rangle_{\rho_{eq}}/\langle u(0)^2\rangle_{\rho_{eq}}$
converges to an equilibrium state exponentially fast in time, i.e., 
\begin{align}\label{K(t)_Langevin}
\left|K(t)-\left(\langle\K u_0\rangle_{\rho_{eq}} \langle\Q\K^*u_0\rangle_{\rho_{eq}} 
+\langle\Q\K^* u_0,w\rangle_{\rho_{eq}} \langle\K u_0,w\rangle_{\rho_{eq}} 
\right)\right|
\leq Ce^{-\alpha_{\Q}t},
\end{align}
where $\K w=\K^* w=u$.
\end{corollary}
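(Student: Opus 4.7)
The plan is to invoke Corollary \ref{cor:K_conver} and then identify the equilibrium limit explicitly using the formula for the spectral projection $\pi_0^{\Q}$ established in Proposition \ref{prop2}. Since the hypotheses of both results are in force, Corollary \ref{cor:K_conver} immediately provides the exponential bound
\begin{align*}
\bigl|K(t) - \langle \Q\K^* u_0, \pi_0^{\Q}\K u_0\rangle_{\rho_{eq}}\bigr| \leq C e^{-\alpha_{\Q} t},
\end{align*}
so all that remains is to evaluate the inner product on the left and recognize it as the constant appearing inside the absolute value in \eqref{K(t)_Langevin}.

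To this end, I would first transport the representation \eqref{PiQ*} of $\tilde{\pi}_0^{\tQ}$ from the flat space $L^2(\R^{2d})$ back to $L^2(\R^{2d};\rho_{eq})$ through the isometry $\U$ of \eqref{unitary_t}. Because $\U$ sends $e^{-\beta H/2}\mapsto 1$, $v_i e^{-\beta H/2}\mapsto v_i$, and $w_i e^{-\beta H/2}\mapsto w_i$, this yields the representation
\begin{align*}
\pi_0^{\Q}(\cdot) = \langle \cdot\rangle_{\rho_{eq}} + \P(\cdot) + \sum_{i=1}^m \langle \cdot, w_i\rangle_{\rho_{eq}}\, w_i
\end{align*}
in the weighted Hilbert space, which in the one-dimensional setting of the Corollary collapses to three terms with $m=1$ and a single $w$ satisfying $\K w = \K^* w = u_0$.

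Applying this projection to $\K u_0$ and pairing with $\Q\K^* u_0$ produces three contributions: a kernel-of-$\K$ piece $\langle \K u_0\rangle_{\rho_{eq}}\langle \Q\K^* u_0\rangle_{\rho_{eq}}$, a middle term $\langle \Q\K^* u_0, \P\K u_0\rangle_{\rho_{eq}}$, and a $w$-piece $\langle \K u_0, w\rangle_{\rho_{eq}}\langle \Q\K^* u_0, w\rangle_{\rho_{eq}}$. The middle term vanishes because $\P$ is a symmetric projection, so $\Q\P = 0$ and therefore $\langle \Q\K^* u_0, \P\K u_0\rangle_{\rho_{eq}} = \langle \K^* u_0, \Q\P\K u_0\rangle_{\rho_{eq}} = 0$; the surviving two terms then match exactly the equilibrium state inside the absolute value of \eqref{K(t)_Langevin}. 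The only delicate step is the faithful translation of \eqref{PiQ*} through the unitary $\U$: one has to verify that the orthonormality $\langle v_i, w_j\rangle_{\rho_{eq}} = 0$ and zero-mean conditions assumed in Proposition \ref{prop2} are inherited in the present one-dimensional setting, so that $\pi_0^{\Q}$ really admits the simple three-summand form above rather than only the abstract basis expansion \eqref{simple_form}. Apart from this bookkeeping, the corollary is a one-line computation on top of Corollary \ref{cor:K_conver}.
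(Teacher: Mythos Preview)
Your proposal is correct and follows essentially the same route as the paper: the paper's proof is the one-line remark that the result ``follows immediately from \eqref{kernel_est}, \eqref{PiQ*} and the fact that $\P\Q=0$,'' and you have simply unpacked each of these three ingredients --- applying the exponential bound from Corollary~\ref{cor:K_conver}, transporting the explicit projector \eqref{PiQ*} through the isometry $\U$, and killing the $\P$-term via $\Q\P=0$ (equivalently $\P\Q=0$). Your added bookkeeping about the orthonormality and zero-mean conditions needed for the three-summand form of $\pi_0^{\Q}$ is a fair caveat that the paper leaves implicit.
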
 
\begin{proof}
The Corollary follows immediately from \eqref{kernel_est},  
\eqref{PiQ*} and the fact that $\P\Q=0$. 

\end{proof}

\noindent 
We emphasize that if $w$ is known then the equilibrium state 
can be calculated explicitly.  
It is straightforward to extend \eqref{K(t)_Langevin}
to matrix-valued memory kernels \eqref{SFD}.
By following the same steps that lead us to 
\eqref{matrixK}, we obtain 
\begin{align}\label{matrixK1}
\|\bm K(t)-\bm G^{-1}\bm C^{\Q}\|_{\mathscr{M}}\leq C\|\bm G^{-1}\bm D^{\Q}\|_{\mathscr{M}}e^{-\alpha_{\Q}t}, 
\end{align}
where $\|\cdot\|_{\mathscr{M}}$ denotes any matrix norm, 
and $\bm G$ is the Gram matrix \eqref{gram}. 
The entries of the matrix $\bm D^{\Q}$ and $\bm C^{\Q}$ are given 
explicitly by 
\begin{align*}
D^{\Q}_{ij}&=\|\Q\K^*u_i(0)\|_{L^2_{eq}}
\|\K u_j(0)\|_{L^2_{eq}},\\
C^{\Q}_{ij}&=\langle\Q\K^*u_i(0)\rangle_{\rho_{eq}} \langle\K u_j(0)\rangle_{\rho_{eq}} +\sum_{k=1}^m\langle\K u_j(0),w_k(0)\rangle_{\rho_{eq}} \langle\Q\K^* u_i(0),w_k(0)\rangle_{\rho_{eq}} .
\end{align*}
The components of the EMZ fluctuation term \eqref{f}
decay to an equilibrium state as well, exponentially fast 
in time. In fact, if we choose the initial condition as 
$\bm u_0=\Q\K \bm u_0$, then \eqref{QKQ_estimation} 
yields the following $L^2(\R^n;\rho_{eq})$-equivalent 
estimate 
\begin{align}\label{QKQ_estimation_forcei}
\Bigl\|f_j(t)-\Bigl(\langle\Q\K u_j(0)\rangle_{\rho_{eq}} +\sum_{k=1}^m\langle\Q\K u_j(0),w_k(0)\rangle_{\rho_{eq}} \Bigr)\Bigr\|_{L^2_{eq}}
\leq Ce^{-\alpha_{\Q} t}\|\Q\K u_j(0)\|_{L^2_{eq}}.
\end{align}
The inequality \eqref{QKQ_estimation_forcei} can be 
written in a vector form as
\begin{align}
\label{QKQ_estimation_forcei1}
\Bigl\|\bm f(t)-
\Bigl(\langle\Q\K \bm u(0)\rangle_{\rho_{eq}} +\sum_{k=1}^m&\langle\Q\K\bm u(0),w_k(0)\rangle_{\rho_{eq}} w_k(0)\Bigr)\Bigr\|_{V_{eq}}\nonumber \\
&\leq Ce^{-\alpha_{\Q}t}\| 
(\|\Q\K u_1(0)\|_{L^2_{eq}},\cdots,\|\Q\K u_m(0)\|_{L^2_{eq}}
)\|_\mathscr{M},
\end{align}
where $\|\cdot\|_{V_{eq}}$ is a norm in the tensor product 
space $V_{eq}=\otimes_{i=1}^m L^2(\R^n;\rho_{eq})$, defined similarly to \eqref{prod_norm}.

\section{Summary}
\label{sec:conclusion}

We developed a thorough mathematical analysis of the 
effective Mori-Zwanzig equation governing the dynamics of 
noise-averaged observables in nonlinear dynamical 
systems driven by multiplicative Gaussian white noise.   
Building upon recent work of Eckmann, Hairer, Helffer, H\'erau and Nier  
\cite{eckmann2003spectral, nier2005hypoelliptic} 
on the spectral properties of hypoelliptic operators, 
we proved that the EMZ memory kernel 
and fluctuation terms converge exponentially fast (in time) 
to a {computable} equilibrium state. 
This allows us to effectively study the asymptotic 
dynamics of any smooth quantity of interest 
depending on the stochastic flow generated by the SDE \eqref{eqn:sde}. 
We applied our theoretical results to 
a particle system widely used in statistical mechanics
to model the mesoscale dynamics of liquids and gasses, 
and proved that for smooth polynomial-bounded 
particle interaction potentials the EMZ memory 
and fluctuation terms decay exponentially fast 
in time to a unique equilibrium state.
Such an equilibrium state depends on the kernel of 
the orthogonal dynamics generator $\Q\K\Q$ 
and its adjoint $\Q\K^*\Q$.
We conclude by emphasizing that the Mori-Zwanzig 
framework we developed in this paper can be 
generalized to other stochastic dynamical 
systems, e.g., systems driven by fractional 
Brownian motion with anomalous long-time behavior
\cite{bazzani2003diffusion,denisov2009generalized,
morita1980contraction}, provided there exists a 
strongly continuous semigroup for such systems  
that characterizes the dynamics of noise-averaged observables.

\vspace{0.3cm}
\noindent 
{\bf Acknowledgements} 
This research was partially supported by the Air Force Office 
of Scientific Research (AFOSR) grant FA9550-16-586-1-0092 and by the National Science Foundation (NSF) grant 2023495 -- TRIPODS: Institute for Foundations of Data Science. The authors would like to thank 
Prof. F. H\'erau, Prof. B. Helffer and Prof. F. Nier for helpful discussions 
on the spectral properties of 
the Kolmogorov operator.  

\vspace{0.3cm}
\noindent 
{\bf Data availability statement} 
The data that support the findings of this study 
are available from the corresponding author upon 
request.

\vspace{0.3cm}
\noindent 
{\bf Conflict of interest} 
The authors have no conflicts to disclose.

\bibliographystyle{plain}
\bibliography{4}
\end{document}